\documentclass{extarticle}

\AtBeginDocument{%
  \providecommand\BibTeX{{%
    \normalfont B\kern-0.5em{\scshape i\kern-0.25em b}\kern-0.8em\TeX}}}

\usepackage[utf8]{inputenc}
\usepackage[english]{babel}
\usepackage{amsthm}
\usepackage{amsmath}
\usepackage{amssymb}
\usepackage{color}
\usepackage{tikz}
\usepackage{enumitem}
\usepackage{hyperref}
\usepackage[a4paper, margin=2.5cm]{geometry}
\usepackage[numbers,sort]{natbib}

\newtheorem{Problem}{Problem}[section]

\newtheorem{Theorem}[Problem]{Theorem}
\newtheorem{Lemma}[Problem]{Lemma}

\newtheorem{Algorithm}[Problem]{Algorithm}
\newtheorem{Proposition}[Problem]{Proposition}
\newtheorem{Definition}[Problem]{Definition}
\newtheorem{Example}[Problem]{Example}
\newtheorem{Notation}[Problem]{Notation}

\def\clap#1{\hbox to0pt{\hss#1\hss}}
\def\eatspace#1{#1}
\def\step#1#2{\par\kern1pt\dimen44=#2em\advance\dimen44 1.67em\hangindent\dimen44\hangafter=1\noindent\rlap{\small#1}\kern\dimen44\relax\eatspace}
\let\set\mathbb
\def\<#1>{\langle#1\rangle}
\def\K{\set K}

\DeclareMathOperator{\sep}{sep}

\DeclareMathOperator{\Gal}{Gal}
\DeclareMathOperator{\lp}{lp}

\DeclareMathOperator{\rem}{rem}

\begin{document}

\title{Separating Variables in Bivariate Polynomial Ideals}

\author{Manfred Buchacher\footnote{\href{mailto:manfred.buchacher@jku.at}{manfred.buchacher@jku.at}, Institute for Algebra, Johannes Kepler University, Linz, Austria},
Manuel Kauers\footnote{\href{mailto:manuel.kauers@jku.at}{manuel.kauers@jku.at}, Institute for Algebra, Johannes Kepler University, Linz, Austria},
Gleb Pogudin\footnote{
\href{mailto:gleb.pogudin@polytechnique.edu}{gleb.pogudin@polytechnique.edu}, Department of Computer Science, Higher School of Economics, Moscow, Russia and LIX, CNRS, \'Ecole Polytechnique, Institut Polytechnique de Paris, France
}}

\date{}

\maketitle

\begin{abstract}
  We present an algorithm which for any given
  ideal $I\subseteq\K [x,y]$ finds all elements of $I$
  that have the form $f(x)-g(y)$, i.e., all elements
  in which no monomial is a multiple of~$xy$.
\end{abstract}

\section{Introduction}

One of the fundamental problems in computer algebra and applied
algebraic geometry is the problem of elimination. Here, we are given a
polynomial ideal $I\subseteq\K[x_1,\dots,x_n,y_1,\dots,y_m]$ and the
task is to compute generators of the ideal $I\cap\K[x_1,\dots,x_n]$.
The resulting ideal of $\K[x_1,\dots,x_n]$ consists of
all elements of $I$ that do not contain any terms that are a multiple of
any of the variables~$y_i$. It is well-known that this problem can be
solved by computing a Gr\"obner basis with respect to an elimination
order that assigns higher weight to terms involving $y_1,\dots,y_m$ than
to terms not involving these variables.

It is less clear how to use Gr\"obner bases (or any other standard elimination techniques) for
finding ideal elements that do not contain any terms which are a
multiple of certain prescribed terms rather than certain prescribed
variables. The problem considered in this paper is an elimination
problem of this kind. Here, given an ideal
$I\subseteq\K[x_1,\dots,x_n,y_1,\dots,y_m]$, we are interested in all
elements of $I$ that do not involve any terms which are multiples of any
of the terms $x_iy_j$ ($i=1,\dots,n$, $j=1,\dots,m$). Note that, these
are precisely the elements of $I$ which can be written as the sum of a
polynomial in $x_1,\dots,x_n$ only and a polynomial in $y_1,\dots,y_m$
only, so the problem under consideration is as follows.

\begin{Problem}[Separation]
\begin{description}
  \item[]
  \item[Input] An ideal $I \subseteq \K[x_1, \ldots, x_n, y_1, \ldots, y_m]$;
  \item[Output] Description of all $f - g \in I$ such that 
  \[
  f \in \K[x_1, \ldots, x_n] \text{ and } g \in \K[y_1, \ldots, y_m].
  \]
\end{description}
\end{Problem}

At first glance, it may seem that there should be a simple way to solve
this problem with Gr\"obner bases, similarly as for the classical
elimination problem. However, we were not able to come up with such an
algorithm. The obstruction seems to be that there is no term order that
ranks the term $xy$ higher than both $x^2$ and~$y^2$.

We ran into the need for such an algorithm when we tried to automatize an interesting
non-standard elimination step which appears in Bousquet-M\'elou's ``elementary''
solution of Gessel's walks~\cite{bousquet16}. Dealing with certain power series, 
say $u\in\K [x][[t]]$ and $v\in\K [x^{-1}][[t]]$, she finds polynomials $f,g$ such 
that $f(u)-g(v)=0$, and then concludes that $f(u)$ and $g(v)$ must
in fact belong to $\K [[t]]$. Deriving a pair $(f,g)$ automatically from known relations
among $u,v$ amounts to the problem under consideration.

The problem also arises when one wants to compute the intersection of two $\K $-algebras.
For example, suppose that for given $u,v\in\K [t_1,\dots,t_n]$ one wants to compute
$\K [u]\cap\K [v]$. This can be done by finding all pairs $(f,g)$ such that $f(u)=g(v)$, i.e., all pairs $(f,g)$ with $f(x)-g(y)\in\<x-u,y-v>\cap\K [x,y]$. See~\cite{beals2009,engstrom1941} for a discussion of this and similar problems.

\begin{Definition}
 Let $p\in\K [x_1,\dots,x_n,y_1,\dots,y_m]$.
 \begin{enumerate}
 \item $p$ is called \emph{separated} if there exist
   $f\in\K [x_1,\dots,x_n]$ and $g\in\K [y_1,\dots,y_m]$ such that 
   $p=f-g$.
 \item $p$ is called \emph{separable} if there is a
   $q\in\K [x_1,\dots,x_n,y_1,\dots,y_m]$ such that $qp$ is separated.
 \end{enumerate}
\end{Definition}

\begin{Proposition}
  Let $I$ be an ideal in $\K [x_1,\dots,x_n,y_1,\dots,y_m]$.
  Then 
    \[
      A(I):=\{\,(f,g)\in\K [x_1,\dots,x_n]\times\K [y_1,\dots,y_m] : f - g\in I\,\}
    \]
   is a unital $\K $-algebra with respect to component-wise addition and multiplication and component-wise multiplication by elements of $\K$. We refer to $A(I)$ as the \emph{algebra of separated polynomials} of~$I$.
\end{Proposition}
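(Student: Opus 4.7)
The plan is to verify directly that the four algebra axioms hold, using that $I$ is an ideal and that the componentwise operations on $\K[x_1,\dots,x_n]\times\K[y_1,\dots,y_m]$ already make the ambient set a unital $\K$-algebra. So the only real content is to check that $A(I)$ is closed under these operations and contains the unit.

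First I would note that $(1,1)\in A(I)$ because $1-1=0\in I$, giving the multiplicative unit; and $(0,0)\in A(I)$ trivially. Closure under componentwise addition and scalar multiplication is immediate from the fact that $I$ is a $\K$-vector subspace: if $(f_1,g_1),(f_2,g_2)\in A(I)$ and $\lambda\in\K$, then $(f_1+f_2)-(g_1+g_2)=(f_1-g_1)+(f_2-g_2)\in I$ and $(\lambda f_1)-(\lambda g_1)=\lambda(f_1-g_1)\in I$.

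The only step that actually uses the \emph{ideal} property (as opposed to just the $\K$-vector space property) of $I$ is closure under componentwise multiplication. Given $(f_1,g_1),(f_2,g_2)\in A(I)$, I would write
\[
f_1f_2 - g_1g_2 \;=\; f_1(f_2-g_2) + (f_1-g_1)g_2,
\]
and observe that both summands lie in $I$ since $f_2-g_2,f_1-g_1\in I$ and $I$ is closed under multiplication by arbitrary elements of $\K[x_1,\dots,x_n,y_1,\dots,y_m]$. Hence $(f_1f_2,g_1g_2)\in A(I)$.

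None of the steps is a real obstacle; the mild subtlety is just recognising that the product of two elements of $A(I)$ lies in $A(I)$ because of the telescoping identity above, which is exactly where the ideal property (rather than mere $\K$-linearity of $I$) enters. Associativity, commutativity and distributivity are inherited from the ambient product ring $\K[x_1,\dots,x_n]\times\K[y_1,\dots,y_m]$ and need no separate argument.
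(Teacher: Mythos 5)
Your proof is correct and follows essentially the same approach as the paper: both verify that $A(I)$ is a $\K$-vector subspace and then establish closure under componentwise multiplication via a telescoping identity (yours is $f_1f_2-g_1g_2=f_1(f_2-g_2)+(f_1-g_1)g_2$, the paper's is $(f-g)f'+g(f'-g')=ff'-gg'$, which are the same calculation up to regrouping). You are somewhat more explicit about the routine $\K$-linearity checks, but the substance is identical.
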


\begin{proof}
We just note that $A(I)$ is clearly a $\K $-vector space, and that 
it is closed under component-wise multiplication, as for any $(f,g),(f',g')\in A(I)$ we have
$f-g\in I$ and $f'-g'\in I$, so $(f-g)f'+g(f'-g')=ff'-gg'\in I$. It is unital, because we always have $(1,1)\in A(I)$. 
\end{proof}

Given ideal generators of~$I$, we want to determine $\K $-algebra generators of~$A(I)$.
This is in general too much to be asked for, because, as shown in Example~\ref{ex:not_finitely_generated},
$A(I)$ may not be finitely generated.
On the positive side, it is known that $A(I)$ is finitely generated if $I$ is a principal ideal in the ring of bivariate polynomials (see \cite{fried69}). 
%and generalizations in~\cite{schicho95, aichinger11}).

The main result of the paper is Algorithm~\ref{alg:general} for computing
generators of the algebra $A(I)$ for a given bivariate ideal $I \subseteq \K[x, y]$.
In particular, it implies that such an algebra is always finitely generated and yields an algorithm to compute a minimal separated multiple of a bivariate polynomial~\cite[Definition~4.1]{fried69}.
An implementation of the algorithm in Mathematica can be found on the website of the second author.

The general structure of the algorithm is the following. 
Every bivariate ideal is the intersection of a zero-dimensional ideal
and a principal ideal.
We solve the separation problem for the zero-dimensional case (Section~\ref{sec:0}) and for the principal case (Section~\ref{sec:principal}) separately.
Then we show how to compute the intersection of the resulting algebras in Section~\ref{sec:arbitrary}.
We conclude with discussing the case of more than two variables in Section~\ref{sec:more_than_two}.

In the context of separated polynomials, many deep results have been obtained
for some kind of ``inverse problem'' to the problem considered here, i.e., the study
of the shape of factors of polynomials of the form $f(x) - g(y)$, 
see~\cite{BT00, B99, CNC99, DLS61, F73, cassels, fried69} and references therein.
We use techniques developed in~\cite{cassels} in our proofs (see Section~\ref{sec:principal}).

We assume throughout that the ground field $\K$ has characteristic zero and that
for a given element of an algebraic extension of $\K$ we can decide whether it is
a root of unity. 
This is true, for example, for every number field (see Section~\ref{subsec:algorithm}).

It is an open question whether the assumption on the characteristic of~$\K$ can be eliminated.
In positive characteristic, additional phenomena have to be taken into account.
For example, separable polynomials need not be squarefree, as the example $(x+y)^2\in\set Z_3[x,y]$
shows, which is separable because $(x+y)(x+y)^2=(x+y)^3=x^3+y^3$.

%%%%%%%%%%%%%%%%%%%%%%%%%%%%%%%%%%%%%%%%%%%%%%%%%%%%%%%%%%%%%%%%%%%%%%%%

\section{Zero-Dimensional Ideals}\label{sec:0}

When $I\subseteq\K [x,y]$ has dimension zero, it is easy to separate variables.
In this case, there are nonzero polynomials $p,q$ with $I\cap\K [x]=\<p>$ and $I\cap\K [y]=\<q>$.
Clearly, these univariate polynomials $p$ and $q$ are separated.
Also all $\K [x]$-multiples of $p$ and all $\K [y]$-multiples of~$q$ are
separated elements of~$I$.

An arbitrary pair $(f,g)\in\K [x]\times\K [y]$ belongs to $A(I)$ if and only if
$(f+up,g+vq)$ belongs to $A(I)$ for all $u\in\K [x]$ and $v\in\K [y]$. In particular,
we have $(f,g)\in A(I)\iff (\rem_x(f, p),\rem_y(g, q))\in A(I)$. It is therefore sufficient to find
all pairs $(f,g)\in A(I)$ with $\deg_x f<\deg_x p$ and $\deg_y g<\deg_y q$. These pairs can be found
with linear algebra.

\begin{Algorithm}\label{alg:1}
  Input: $I\subseteq\K [x,y]$ of dimension zero.\\
  Output: generators of the $\K $-algebra $A(I)\subseteq\K [x]\times\K [y]$

  \step 10 if $I=\<1>$, return $\{(1,0),(x,0),(0,1),(0,y)\}$.
  \step 20 compute $p\in\K [x]$ and $q\in\K [y]$ such that
  \[
    I\cap\K [x]=\<p>\;\text{ and }\;I\cap\K [y]=\<q>.
  \]
  \step 30 make an ansatz $h=\sum_{i=0}^{\deg_x p-1} a_ix^i - \sum_{j=0}^{\deg_y q-1}b_jy^j$
  with undetermined coefficients $a_i,b_j$.
  \step 40 compute the normal form of $h$ with respect to a Gr\"obner basis of~$I$ and
  equate its coefficients to zero.
  \step 50 solve the resulting linear system over $\K $ for the unknowns $a_i,b_j$ and
  let $(f_1,g_1),\dots,(f_d,g_d)$ be the pairs of polynomials corresponding to a basis of
  the solution space.
  \step 60 return $(f_1,g_1),\dots,(f_d,g_d),(p,0),\dots,(x^{\deg_x p-1}p,0)$,
  $(0,q),\dots,(0,y^{\deg_y q-1}q)$.
\end{Algorithm}

\begin{Proposition}\label{prop:alg_zerodim_correct}
  Algorithm~\ref{alg:1} is correct.
\end{Proposition}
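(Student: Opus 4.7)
The plan is to prove the algorithm's correctness in two stages: soundness (each returned pair lies in $A(I)$) and completeness (every element of $A(I)$ is a $\K$-algebra combination of the returned pairs). I would first handle the degenerate case $I=\<1>$, where $A(I)=\K[x]\times\K[y]$: products and $\K$-linear combinations of $(1,0),(x,0),(0,1),(0,y)$ produce $\K[x]\times\{0\}$ and $\{0\}\times\K[y]$, whose sum is all of $A(I)$.

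In the main case $I\neq\<1>$, zero-dimensionality guarantees that $V(I)$ is a finite set, so the elimination ideals $I\cap\K[x]$ and $I\cap\K[y]$ are nonzero and hence principal, with generators $p,q$ that step 2 computes. Soundness is then immediate: $(x^ip,0),(0,y^jq)\in A(I)$ because $p,q\in I$, and each $(f_k,g_k)$ from step 5 satisfies $f_k-g_k\in I$ by the very condition imposed on the ansatz (vanishing normal form modulo a Gr\"obner basis).

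For completeness, I would take an arbitrary $(f,g)\in A(I)$ and reduce via polynomial division: write $f=up+\bar f$ with $\deg_x\bar f<\deg_x p$ and $g=vq+\bar g$ with $\deg_y\bar g<\deg_y q$. This gives
\[
(f,g)=(up,0)+(0,vq)+(\bar f,\bar g),
\]
and since $(up,0),(0,vq)\in A(I)$ automatically, the remainder $(\bar f,\bar g)$ lies in $A(I)$ as well. Because $(\bar f,\bar g)$ respects the ansatz degree bounds, it must be a $\K$-linear combination of the basis $(f_1,g_1),\dots,(f_d,g_d)$ returned by step 5.

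The main obstacle is to show that $(up,0)$ lies in the $\K$-algebra generated by the output for every $u\in\K[x]$, even though only the finite set $(x^ip,0)$ with $0\le i<\deg_x p$ is at our disposal. The plan is to argue by induction on $\deg u$: if $\deg u<\deg_x p$, then $(up,0)$ is already a $\K$-linear combination of those generators; otherwise, write $u=u_0+u_1p$ with $\deg u_0<\deg_x p$ and use
\[
(up,0)=(u_0p,0)+(u_1p,0)\cdot(p,0),
\]
invoking the hypothesis on $(u_1p,0)$ since $\deg u_1<\deg u$. A symmetric induction treats $(0,vq)$, and combining the three contributions expresses $(f,g)$ as an algebra combination of the returned generators.
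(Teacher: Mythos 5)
Your proof is correct and follows essentially the same route as the paper: both reduce $(f,g)$ modulo $(p,q)$, observe the remainder is a $\K$-linear combination of the $(f_k,g_k)$, and then show $\<p>$ (resp.\ $\<q>$) is generated as a $\K$-algebra by $p,xp,\dots,x^{\deg_x p-1}p$ (resp.\ the analogous list for $q$). Your induction on $\deg u$ is just a reformulation of the paper's ``repeated division'' writing $u=\sum_i r_ip^i$ with $\deg r_i<\deg p$, so the two arguments are the same in substance.
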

\begin{proof}
  It is clear by construction that all returned elements belong to~$A(I)$. It remains to show
  that they generate $A(I)$ as $\K $-algebra.
  This is clear if $I=\<1>$, because then $A(I)=\K[x]\times\K[y]$.
  Now suppose that $I\neq\<1>$ and let $(f,g)\in A(I)$. 
  Because of $I\neq\<1>$, we have $\deg_x p,\deg_y q>0$.
  Then $\<p>\subseteq\K [x]$ is generated as a $\K $-algebra by 
  $p,xp,\dots,x^{\deg_x p-1}p$.
  To see this, we just note that, by performing repeatedly division by $p$ on a polynomial and the resulting quotients, 
  any $u\in\<p>$ can be written 
  $$
  u = \sum_{i=1}^k r_i p^i
  $$
  where $r_i$ are polynomials with $\deg r_i < \deg p$. 
  Hence, $\<p>$ is a subset of the algebra generated by $p,xp,\dots,x^{\deg_x p-1}p$, and clearly, the reverse inclusion holds as well.
  For the same reason, $\<q>$ is generated as $\K$-algebra by $q,xq,\dots,x^{\deg_x q-1}q$.
  
  Hence $(f,g)$ can be expressed in terms of the given generators if and only if 
  $(\rem_x(f,p),\rem_y(g,q))$ can be expressed in terms of the given generators.
  Because of $\deg_x(\rem_x(f,p))<\deg_x(p)$ and $\deg_y(\rem_y(g,q))<\deg_y(q)$, the pair
  $(\rem_x(f,p),\rem_y(g,q))$ is a $\K $-linear combination of $(f_1,g_1),\dots,(f_d,g_d)$, as required. 
\end{proof}

\begin{Example}
 Consider the $0$-dimensional ideal $I=\<x^2y^2-1,y^5+y^3+xy^2+x>$.
 We have
 \[
   I\cap\K[x]=\<x^{10}+x^8-x^2-1>
   \kern.5em\text{and}\kern.5em
   I\cap\K[y]=\<y^{10}+y^8-y^2-1>.
 \]
 Every separated polynomial of $I$ therefore has the form
 \[
   f(x) + u(x)(x^{10}+x^8-x^2-1)
   -g(y) - v(y)(y^{10}+y^8-y^2-1)
 \]
 for certain $f(x),g(y)$ of degree less than $10$ and some $u(x),v(y)$.
 To find the pairs $(f,g)$, compute the normal form of 
 $h = \sum_{i=0}^9 a_ix^i - \sum_{i=0}^9b_jy^j$ with respect to a Gr\"obner
 basis of~$I$. Taking a degrevlex Gr\"obner basis, this gives
 \begin{alignat*}1
  (a_0+a_8-b_0) + (a_6-b_2)y^2 + (a_7+b_5)xy^2+\cdots.
 \end{alignat*}
 Equate the coefficients with respect to $x,y$ to zero and solve the 
 resulting linear system for the unknowns $a_0,\dots,a_9,b_0,\dots,b_9$.
 The following pairs of polynomials $(f,g)$ correspond to a basis of the
 solution space:
 \begin{alignat*}1
  &(1,1),\ (x-x^9,y^9-y),\ (x^2,y^8+y^6-1), \ (x^9+x^3,-y^9-y^3)\\
  &(x^4,-y^8+y^4+1),\ (x^5-x^9,y^3-y^7),\ (x^6,y^8+y^2-1)\\
  &(x^9+x^7,-y^5-y^3), \ (x^8,2-y^8).
 \end{alignat*}
 These pairs together with the pairs $(x^i(x^{10}+x^8-x^2-1),0)$
 and $(0,y^i(y^{10}+y^8-y^2-1))$ for $i=0,\dots,9$ form a set of
 generators of~$A(I)$. 
\end{Example}

For an ideal $I\subseteq\K [x,y]$ to be zero-dimensional means that its
codimension as $\K $-subspace of~$\K [x,y]$ is finite. Note that, in this
case, also $A(I)$ has finite codimension as $\K $-subspace of $\set
K[x]\times\K [y]$. Since we will need this feature later, let us record it as a lemma.

\begin{Lemma}\label{lemma:1}
  If $I\subseteq\K [x,y]$ has dimension zero, then there is a finite-dimensional
  $\K $-subspace $V$ of $\K [x]\times\K [y]$ such that the direct sum $V\oplus A(I)$ is equal to $\K [x]\times\K [y]$. Moreover, we can compute a basis of such a~$V$, and for
  every $(f,g)\in\K [x]\times\K [y]$ we can compute a $(\tilde f,\tilde g)\in V$
  such that $(f,g)-(\tilde f,\tilde g)\in A(I)$.
\end{Lemma}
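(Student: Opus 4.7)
The plan is to build $V$ explicitly from the ingredients already assembled in Algorithm~\ref{alg:1}. Let $p\in\K[x]$ and $q\in\K[y]$ be the generators of $I\cap\K[x]$ and $I\cap\K[y]$ (both nonzero since $I$ is zero-dimensional). Since $p\in I$ and $q\in I$, both $\<p>\times\{0\}$ and $\{0\}\times\<q>$ sit inside $A(I)$, hence so does $\<p>\times\<q>$. The quotient $\K[x]\times\K[y]\,/\,(\<p>\times\<q>)$ has finite dimension $\deg p+\deg q$ as a $\K$-vector space, so $A(I)$ already has codimension at most $\deg p+\deg q$ in $\K[x]\times\K[y]$. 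It remains to select a concrete finite-dimensional complement.

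Let
\[
 R := \{\,(f,g)\in\K[x]\times\K[y] : \deg_x f<\deg p,\ \deg_y g<\deg q\,\},
\]
so that $\K[x]\times\K[y]=R\oplus(\<p>\times\<q>)$ as $\K$-vector spaces via the componentwise remainder map $(f,g)\mapsto(\rem_x(f,p),\rem_y(g,q))$. Set $A_R:=A(I)\cap R$. For any $(f,g)\in A(I)$, writing $(f,g)=(r_1,r_2)+(up,vq)$ with $(r_1,r_2)\in R$ shows $(r_1,r_2)\in A_R$, since $(up,vq)\in A(I)$. Hence $A(I)=A_R\oplus(\<p>\times\<q>)$. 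Now choose $V\subseteq R$ to be any vector-space complement of $A_R$ in~$R$; then
\[
 V+A(I)=V+A_R+(\<p>\times\<q>)=R+(\<p>\times\<q>)=\K[x]\times\K[y],
\]
and if $v\in V\cap A(I)$, then $v\in V\cap R\cap A(I)=V\cap A_R=\{0\}$, so the sum is direct. Moreover $V\subseteq R$ is finite-dimensional.

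Both $V$ and the reduction map are effective. A basis of $A_R$ is exactly what Algorithm~\ref{alg:1} produces in steps 3--5 as the pairs $(f_1,g_1),\dots,(f_d,g_d)$; completing this basis to a basis of $R$ by Gaussian elimination and taking the added vectors as a basis of $V$ is routine linear algebra. For the reduction, given $(f,g)\in\K[x]\times\K[y]$, first compute $(r_1,r_2)=(\rem_x(f,p),\rem_y(g,q))\in R$, so that $(f,g)-(r_1,r_2)\in\<p>\times\<q>\subseteq A(I)$. Then decompose $(r_1,r_2)=(\tilde f,\tilde g)+a$ along the chosen direct sum $R=V\oplus A_R$ (again by linear algebra in the finite-dimensional space $R$). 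Since $a\in A_R\subseteq A(I)$, we get $(f,g)-(\tilde f,\tilde g)\in A(I)$ with $(\tilde f,\tilde g)\in V$, as required. The only nontrivial ingredient is the computation of $A_R$, which is already handled by Proposition~\ref{prop:alg_zerodim_correct}; everything else is elementary linear algebra in a space of dimension $\deg p+\deg q$.
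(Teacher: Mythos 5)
Your proof is correct and follows essentially the same route as the paper's: both reduce modulo $(\langle p\rangle\times\langle q\rangle)\subseteq A(I)$ to the finite-dimensional space of remainder pairs $R$, observe that $A(I)\cap R$ is exactly the span of the $(f_i,g_i)$ from Algorithm~\ref{alg:1}, and take a complement of that span inside~$R$. The only cosmetic difference is that the paper row-reduces the $f_i$ to distinct degrees and then reads off an explicit monomial basis for $V$, whereas you take an arbitrary complement obtained by completing the basis of $A_R$ to a basis of $R$; both yield the same effective reduction procedure.
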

\begin{proof}
  Let $p,q,(f_1,g_1),\dots,(f_d,g_d)$ be as in Algorithm~\ref{alg:1}.
  Note that, as a $\K $-vector space, $A(I)$ has the basis
  \[
    \{(f_1,g_1),\dots,(f_d,g_d)\}\cup\{(x^k p,0):k\in\set N\}\cup\{(0,y^k q):k\in\set N\}.
  \]
  Using row-reduction, it can be arranged that the $f_i$ have pairwise
  distinct degrees. Note that, all $f_i$ are nonzero by the choice of~$q$.
  Let $V$ be the $\K $-subspace of $\K [x]\times\K [y]$ generated by the
  pairs $(x^k,0)$ for all $k<\deg_x(p)$ which are not the degree of some~$f_i$
  and the pairs $(0,y^k)$ for all $k<\deg_y(q)$. We have $V\oplus A(I)=\K [x]\times\K [y]$.

  Given $(f,g)\in\K [x]\times\K [y]$, we compute $(\rem_x(f,p),\rem_y(g,q))$,
  and then eliminate all terms from the first component whose exponent is the degree of an~$f_i$. 
  The resulting pair $(\tilde f,\tilde g)$ is an element of $V$ with $(f,g)-(\tilde f,\tilde g)\in A(I)$.
\end{proof}

%%%%%%%%%%%%%%%%%%%%%%%%%%%%%%%%%%%%%%%%%%%%%%%%%%%%%%%%%%%%

\section{Principal Ideals}\label{sec:principal}

We now consider the case where $I=\<p>$ is a principal ideal of $\K [x,y]$.
If $p\in\K[x]\cup\K[y]$, the algebra $A(I)$ of separated polynomials is finitely generated, 
as we have seen in the proof of Proposition~\ref{prop:alg_zerodim_correct}. 
It was shown in~\cite[Theorem 4.2]{fried69} 
that, if $p$ is separable, there is a separated multiple $f(x) - g(y)$ of $p$ that divides any other separated 
multiple of it. 
We refer to $f(x) - g(y)$ as \emph{the minimal separated multiple} of~$p$.
Moreover, \cite[Theorem~2.3]{fried69}~implies that if $p \not\in \K[x] \cup \K[y]$, then $(f,g)$ is an algebra generator for~$A(I)$. 
We note that,~\cite[Theorem~2.3]{fried69} was reproven in~\cite{binder96}, and generalized further in~\cite{aichinger11,schicho95}.
The proof of~\cite[Theorem 4.2]{fried69} was not constructive.
In the following we provide a criterion that allows to decide if $p$ is separable, and if it is, to compute 
its minimal separated multiple.

Our criterion is based on considering the highest graded component of the polynomial
with respect to a certain grading.
The separability of the highest component is a necessary but not a sufficient condition 
for the separability of a polynomial itself.
Surprisingly, there is a weaker converse, that is, the minimal separated multiple 
of the highest component is equal to the highest component of the minimal separated 
multiple of $p$ \emph{if the latter exists} (see Theorem~\ref{thm:reduction_to_lp}).
This allows us to reduce the problem for a general not necessarily homogeneous polynomial to the same problem for a homogeneous polynomial (which is solved in Section~\ref{subsec:homogeneous}) 
and solving a linear system.
The resulting algorithm is presented in Section~\ref{subsec:algorithm}.

Since the case $p \in \mathbb{K}[x] \cup \mathbb{K}[y]$ is trivial, for the rest of the section, we assume that $p \in \mathbb{K}[x, y] \setminus (\mathbb{K}[x] \cup \mathbb{K}[y])$.

%%%%%%%%%%%%%%%%%%%%%%%%%%%%%

\subsection{Homogeneous case}\label{subsec:homogeneous}

\begin{Definition}\label{def:weight}
\begin{enumerate}
    \item[]
    \item A function $\omega$ from the set of monomials in $x$ and $y$ to $\mathbb{R}$
    is called a \emph{weight function} if there exist $\omega_x, \omega_y \in \mathbb{Z}_{>0}$
    such that $\omega(x^i y^j) = \omega_x i + \omega_y j$ for every $i, j \in \mathbb{Z}_{\geq 0}$.
    
    \item Two weight functions are considered to be \emph{equivalent} if they differ by a constant non-zero factor.
    
    \item For a weight function $\omega$ and a nonzero polynomial $p \in\K [x, y]$, $\omega(p)$ is defined to be the maximum of the weights of the monomials of $p$.
    
    \item For a weight function $\omega$ and a polynomial $p \in\K [x, y]$, we define 
    \emph{the $\omega$-leading part of $p$} (denoted by $\lp_{\omega}(p)$) as the sum 
    of the terms of $p$ of weight $\omega(p)$.
\end{enumerate}
\end{Definition}

In this subsection, we consider the case of $p$ being homogeneous with respect to some weight function $\omega$, that is, $\lp_{\omega}(p) = p$.

\begin{Proposition}\label{prop:homog}
   Let $\omega$ be a weight function, and let $p \in \K[x, y] \setminus (\K[x] \cup \K[y])$ satisfy $\lp_{\omega}(p) = p$.
   Then $p$ is separable if and only if
   \begin{enumerate}
       \item $p$ involves a monomial only in $x$, and
       \item all the roots of $p(x, 1)$ in 
       the algebraic closure $\overline{\K}$ of $\K$ are distinct 
       and the ratio of every two of them is a root of unity.
   \end{enumerate}
   Moreover, if $p$ is separable and $N$ is the minimal number such that the ratio of every pair of roots of $p(x, 1)$ is an $N$-th root of unity, then the
   weight of the minimal separated multiple of $p$ is $N\omega_x$.
\end{Proposition}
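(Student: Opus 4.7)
My plan is to exploit the $\omega$-homogeneity to put $p$ in an explicit factored form and then parametrize the branches of the curve $p=0$. Writing $\omega_x = ga$ and $\omega_y = gb$ with $\gcd(a,b)=1$, every monomial of $p$ lies on the line $ai+bj=\omega(p)/g$, so over $\overline{\K}$ we have $p = c\cdot x^{i_0} y^{j_0}\prod_{i=1}^M(x^b-\beta_i y^a)^{e_i}$ with the $\beta_i$ nonzero and distinct; in particular, the roots of $p(x,1)$ are the $b$-th roots of the $\beta_i$ (each with multiplicity $e_i$) together with $0$ to multiplicity $i_0$. For necessity I first show separability forces $i_0=j_0=0$ and $e_i=1$. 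If $y\mid p$, then substituting $y=0$ in $rp = f(x) - g(y)$ forces $f$ constant, whence $p\mid c - g(y) \in \K[y]$ and thus $p\in\K[y]$, a contradiction; symmetrically $x\nmid p$. A repeated irreducible factor $q\notin\K[x]\cup\K[y]$ of $p$ would divide $f'(x)$ upon differentiating $rp=f-g$ in $x$, which is impossible. Hence $p = c\prod(x^b-\beta_i y^a)$ with distinct nonzero $\beta_i$.

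For the root-of-unity half of (2), I use that the scaling $(x,y)\mapsto(s^{\omega_x}x,\, s^{\omega_y}y)$ multiplies $p$ by $s^{\omega(p)}$, so each branch of the zero-set of $p$ admits the polynomial parametrization $(x,y) = (\alpha t^{\omega_x},\, t^{\omega_y})$ with $\alpha$ a root of $p(x,1)$. Substituting into $rp = f(x) - g(y)$ gives $f(\alpha t^{\omega_x}) = g(t^{\omega_y})$ for every root $\alpha$; comparing coefficients of $t^{\omega_x k}$ yields $f_k(\alpha^k - (\alpha')^k)=0$, so whenever $f_k\neq 0$ every ratio $\alpha/\alpha'$ is a $k$-th root of unity (the case of constant $f$ being trivial). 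For sufficiency, let $N_0$ be the least integer with $\beta_i^{N_0} = \beta_j^{N_0}$ for all $i,j$, set $\gamma := \beta_1^{N_0}$ (which lies in $\K$ because it is $\Gal(\overline{\K}/\K)$-invariant), and take $h(x,y) := x^{bN_0} - \gamma\, y^{aN_0}$. Modulo $x^b - \beta_i y^a$ we have $x^{bN_0} \equiv \beta_i^{N_0} y^{aN_0} = \gamma y^{aN_0}$, so each factor divides $h$; by pairwise coprimality of these distinct irreducibles we obtain $p\mid h$.

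For minimality, any separated $F(x)-G(y)\in (p)$ still satisfies $F(\alpha t^{\omega_x}) = G(t^{\omega_y})$ on every branch; the same coefficient argument now forces $F_k = 0$ unless $bN_0 \mid k$ and determines $G$ from $F$, so $F - G = \sum_{q\ge 0} F_{bN_0 q}\bigl(x^{bN_0 q} - \gamma^q y^{aN_0 q}\bigr)$, and each summand is divisible by $h$ via the identity $u^q - v^q = (u - v)\sum_{j=0}^{q-1} u^{q-1-j} v^j$. Hence $h$ is the minimal separated multiple of $p$. Finally, to conclude $\omega(h) = N\omega_x$ I identify $N = bN_0$: raising any ratio of roots of $p(x,1)$ to the $b$-th power yields some $\beta_i/\beta_j$, so those $b$-th powers generate a cyclic group of order $N_0$, while the ratios themselves already contain every $b$-th root of unity (as the ratio of two $b$-th roots of the same $\beta_i$); this forces $b\mid N$ and $N/b = N_0$, whence $\omega(h) = bN_0 \cdot \omega_x = N\omega_x$. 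The main obstacle is this final bookkeeping: showing cleanly that $F_k \neq 0$ entails $bN_0\mid k$ requires matching coefficients inside each $\K[x,y]/(x^b - \beta_i y^a)$, and the group-theoretic identification $N = bN_0$ must combine the $b$-th root symmetry with the minimality of $N_0$.
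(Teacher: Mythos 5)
Your proof is correct, but it takes a genuinely different route from the paper's. The paper's necessity argument is very short: replace the separated multiple $P$ by $\lp_\omega(P)$, note that as an $\omega$-homogeneous separated polynomial it must have the form $ax^m-by^n$, and read off both conditions from $p(x,1)\mid ax^m-b$. Its sufficiency argument then constructs the candidate $x^N-\gamma^N y^{N\omega_x/\omega_y}$ a priori as a Puiseux polynomial, performs Euclidean division by $p$ in the Puiseux field, and uses a Galois automorphism of that field over $\overline\K(y)$ to rule out a non-integer $y$-exponent; minimality of the resulting multiple is then inherited from the divisibility bound given by the necessity direction together with Fried's existence theorem. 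You instead write $\omega_x=ga,\omega_y=gb$ with $\gcd(a,b)=1$, factor $p$ explicitly over $\overline\K$ as $c\,x^{i_0}y^{j_1}\prod(x^b-\beta_i y^a)^{e_i}$, extract condition (1) and squarefreeness by specializing and differentiating $rp=f-g$, derive the root-of-unity condition by substituting the polynomial branch parametrizations $(x,y)=(\alpha t^{\omega_x},t^{\omega_y})$ and comparing two branches (which correctly cancels the $g(t^{\omega_y})$ contribution), and build the separated multiple $h=x^{bN_0}-\gamma y^{aN_0}$ from the explicit factorization, with $\gamma=\beta_1^{N_0}\in\K$ by Galois invariance. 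This buys you a proof entirely inside ordinary polynomial rings, with no Puiseux series, and a \emph{direct} verification that $h$ divides every separated multiple, so you do not need to invoke Fried's theorem for minimality. What it costs is the bookkeeping you flag yourself: the reduction $\omega_x=ga,\omega_y=gb$, the argument that $b\mid k$ and $N_0\mid k/b$ whenever $F_k\neq 0$, and the group-theoretic identification $N=bN_0$ (which does go through, since the cyclic group of ratios of roots of $p(x,1)$ contains all $b$-th roots of unity and maps onto the order-$N_0$ group of ratios $\beta_i/\beta_j$ under the $b$-th power). The paper's necessity half is noticeably slicker, but your sufficiency/minimality half is more elementary and self-contained.
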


\begin{proof}
   Assume that $p$ is separable, and let $P$ be a separated multiple.
   Replacing $P$ with $\lp_{\omega}(P)$ if necessary, we will further assume
   that $P = \lp_{\omega}(P)$.
   Since $P \notin \K[x] \cup \K[y]$ and is separated,
   $P$ involves a monomial in $x$ only, and hence, so does $p$.
   
   Since $P$ is $\omega$-homogeneous and separated, it is of the form $ax^m - by^n$ for some $a, b \in \K\setminus\{0\}$, 
   so $p(x, 1) \mid ax^m - b$.
   All roots of the latter are distinct and the ratio of each of them is an $m$-th root of unity.
   Hence, the same is true for $p(x, 1)$.
   This proves the only-if part of the proposition.
   
   To prove the remaining part of the proposition, 
   let $N$ be as in the statement of the proposition, 
   and $\gamma \in \overline{\K}$ be a root of $p(x, 1)$.
   Consider the $\omega$-homogeneous Puiseux polynomial
   \[
     P := x^N - \gamma^N y^{N\omega_x / \omega_y}.
   \]
   We perform Euclidean division of $P$ by $p$ over the field $F$ of Puiseux series in $y$ over $\overline{\K}$.
   This will yield a representation $P = qp + r$, where $q$ and $r$ are also $\omega$-homogeneous.
   Since $P(x, 1)$ is divisible by $p(x, 1)$, we see that $r(x, 1) = 0$.
   However, the $\omega$-homogeneity of $r$ implies that each of its coefficients with respect to $x$
   is a Puiseux monomial in~$y$.
   Thus, $r = 0$.
   Next, assume that $N\omega_x / \omega_y$ is not an integer. 
   Then there is an automorphism $\sigma$ of the Galois group of $F$ over $\overline{\K}(y)$ that moves $y^{N\omega_x / \omega_y}$.
   Then
   \[
   p \mid P - \sigma(P) \in F,
   \]
   which is impossible.
   Therefore, $P$ is a separated polynomial divisible by $p$ of weight $N\omega_x$.
\end{proof}

Of course, because of symmetry, the statements of Proposition~\ref{prop:homog} also hold for $y$ instead of $x$.

%%%%%%%%%%%%%%%%%%%%%%%%%%%%%%%%%%%%%%%%%%%%

\subsection{Reduction to the homogeneous case}

We will start with a necessary condition for $p$ being separable.

\begin{Lemma}\label{lem:necessary_condition}
  Let $p \in \K[x, y] \setminus (\K[x] \cup \K[y])$ be separable.
  \begin{enumerate}
      \item There exists a unique (up to a constant factor) weight function $\omega$ such that $\lp_\omega(p)$ involves at least two monomials.
      
      \item The polynomial $\lp_\omega(p)$ is separable.
  \end{enumerate}
\end{Lemma}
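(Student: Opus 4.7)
The plan is to exploit two facts about $\omega$-leading parts: the Newton polygon of a product equals the Minkowski sum of the factors' Newton polygons, and, more usefully, for every $a,b\in\K[x,y]$ and every weight function $\omega$ one has $\lp_\omega(ab)=\lp_\omega(a)\,\lp_\omega(b)$, because $\K[x,y]$ is an integral domain and no cancellation can occur among the top-weight terms.

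Since $p$ is separable, I would fix $q\in\K[x,y]$ with $qp=f(x)-g(y)$ separated. Both $f$ and $g$ must have positive degree, since otherwise $qp$ would lie entirely in $\K[x]$ or in $\K[y]$, forcing $p$ itself to be univariate and contradicting $p\notin\K[x]\cup\K[y]$. I would then propose as candidate the weight $\omega^{*}:=(\deg g,\deg f)$. Because $\omega^{*}(x^{\deg f})=\omega^{*}(y^{\deg g})=\deg f\cdot\deg g$ while every other monomial of $f(x)-g(y)$ has strictly smaller $\omega^{*}$-weight, one gets
\[
\lp_{\omega^{*}}(f(x)-g(y))=\lc(f)\,x^{\deg f}-\lc(g)\,y^{\deg g},
\]
whereas for any weight $\omega$ not proportional to $\omega^{*}$ exactly one of $\omega(x^{\deg f})$ and $\omega(y^{\deg g})$ is strictly larger, so that $\lp_\omega(f(x)-g(y))$ reduces to a single monomial.

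Combining this with the identity $\lp_\omega(p)\,\lp_\omega(q)=\lp_\omega(f(x)-g(y))$ immediately yields that $\lp_\omega(p)$ is a single monomial whenever $\omega$ is not proportional to $\omega^{*}$. To finish part (1) I would then argue that $\lp_{\omega^{*}}(p)$ really has at least two monomials: were it a single monomial $c\,x^{a}y^{b}$, then $c\,x^{a}y^{b}\cdot\lp_{\omega^{*}}(q)=\lc(f)\,x^{\deg f}-\lc(g)\,y^{\deg g}$, and $x^{a}y^{b}$ would have to divide both $x^{\deg f}$ and $y^{\deg g}$; this forces $a=b=0$, so that $\lp_{\omega^{*}}(p)$ would be a non-zero constant, yet the hypothesis $p\notin\K[x]\cup\K[y]$ guarantees a monomial of strictly positive $\omega^{*}$-weight in $p$, making a constant top part impossible. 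Part (2) is then a one-line consequence, since
\[
\lp_{\omega^{*}}(p)\cdot\lp_{\omega^{*}}(q)=\lc(f)\,x^{\deg f}-\lc(g)\,y^{\deg g}
\]
is an explicit separated multiple of $\lp_{\omega^{*}}(p)$.

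The hardest step is the very last contradiction: I must rule out that $\lp_{\omega^{*}}(p)$ is a single monomial, and the only subtle case is when that monomial is a non-zero constant. Excluding that scenario uses the genuinely bivariate nature of $p$ in an essential way.
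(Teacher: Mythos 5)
Your proposal is correct and takes essentially the same approach as the paper: both fix a separated multiple $qp=f(x)-g(y)$, take the weight $\omega(x^iy^j)=i\deg g+j\deg f$, and exploit the multiplicativity $\lp_\omega(qp)=\lp_\omega(q)\,\lp_\omega(p)$ together with the fact that only $x^{\deg f}$ and $y^{\deg g}$ can appear in the $\omega$-leading part of $f(x)-g(y)$. Your treatment of the constant-monomial corner case is a bit more explicit than the paper's, but the substance is identical.
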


\begin{proof}
  Let $q \in \K [x, y] \setminus \{0\}$ be such that $qp$ is separated.
  Let $\deg_x qp = m$ and $\deg_y qp = n$.
  Define $\omega(x^i y^j) = ni + mj$.
  If $\lp_{\omega}(p)$ contains only one monomial, then
  every monomial in $\lp_{\omega}(qp)$ is divisible by it.
  This is impossible since $\lp_{\omega}(qp)$ involves both $x^m$ and $y^n$.
  
  To prove the uniqueness, assume that there are two nonequivalent weight functions $\omega_1$ and $\omega_2$ with
  this property. 
  Since $\lp_{\omega_i}(qp) = \lp_{\omega_i}(q) \lp_{\omega_i}(p)$ for $i = 1, 2$,
  we have that both $\lp_{\omega_1}(qp)$ and $\lp_{\omega_2}(qp)$ contain at least two monomials.
  However, the only monomials of $qp$ that can appear in the leading part are $x^m$ and $y^n$,
  and there is a unique weight function so that they have the same weight.
  
  The second claim of the lemma follows from $\lp_{\omega}(q) \lp_{\omega}(p) = \lp_{\omega}(qp)$.
\end{proof}

There is an analogous version of Lemma~\ref{lem:necessary_condition} with the lowest homogeneous
part in place of the leading homogeneous part. However, even when both the lowest and the leading
homogeneous part are separable, the whole polynomial need not be separable, as the following
example shows. 

\begin{Example}
  \hangindent=-3cm\hangafter=-4
  For $p=(x^3+x^2y+xy^2+y^3)+y^2\in\set Q[x,y]$, the relevant weight 
  function for the leading homogeneous part as in Lemma~\ref{lem:necessary_condition} is given by
  $\omega_x=\omega_y=1$. It leads to the leading homogeneous part
  $x^3+x^2y+xy^2+y^3$. 
  Analogously, the relevant weight function for the  
  lowest homogeneous part is given by $\omega_x=2,\omega_y=3$.
  It leads to the lowest homogeneous part $x^3+y^2$. Both the leading
  and the lowest homogeneous part are separable.  
  We claim that $p$ is not separable. \hfill
  \smash{\raisebox{2.0em}{\llap{\quad%
  \begin{tikzpicture}[scale=.5]
   \draw[->](0,0)--(3.75,0) node[right]{$x$};
   \draw[->](0,0)--(0,3.75) node[left]{$y$};
   \clip (-.5,-.5) rectangle (3.5,3.5);
   \draw (0,0) grid (4,4);
   \foreach \x/\y in {0/2,0/3,1/2,2/1,3/0} \fill (\x,\y) circle(5pt);
   \draw (0,3)--(3,0) (0,2)--(3,0);
  \end{tikzpicture}\kern.33em}}}

  \hangindent=0pt
  Let $\omega$ be the weight function defined by $\omega(x^iy^j)=2i+3j$, so that the 
  lowest homogeneous part of $p$ is $x^3+y^2$ (weight~6), and the next-to-lowest part is $x^2y$ (weight~7).
  With respect to $\omega$, any separated polynomial involving both variables 
  only consists of homogeneous parts $ax^n+by^m$ whose weight $2n=3m$ is a multiple of~6.
  
  Assume that $p$ is separable and let $q\in\set Q[x,y]\setminus\{0\}$ be such that $qp$ is separated.
  Write $q=q_0+q_1+\cdots$, where $q_0,q_1,\dots$ are the lowest, the next-to-lowest, etc.
  homogeneous parts of $q$ with respect to $\omega$.
  The lowest homogeneous part of $pq$ is then $q_0(x^3+y^2)$, and since it must be separated and 
  involve both variables, we have $\omega(q_0)=0\bmod6$.
  
  Because of $\omega(q_0x^2y)=\omega(q_0(x^3+y^2))+1=1\bmod6$, none of the terms 
  of $q_0x^2y$ can appear in $qp$, so they must all be canceled by something.
  We must therefore have $\omega(q_1)=\omega(q_0)+1$ and $q_0x^2y+q_1(x^3+y^2)=0$.
  This implies that $x^3+y^2$ divides~$q_0$, which in turn implies that the lowest
  homogeneous part $q_0(x^3+y^2)$ of $pq$ has a multiple factor. On the other hand,
  $q_0(x^3+y^2)=ax^n+by^m$ for some $a,b\neq0$, and every such polynomial is squarefree. 
  This is a contradiction. 
\end{Example}

The main result of the section is the following  ``partial converse'' of Lemma~\ref{lem:necessary_condition}.

\begin{Theorem}\label{thm:reduction_to_lp}
  Let $p \in \K[x, y] \setminus (\K[x] \cup \K[y])$ be a separable polynomial.
  Let $\omega$ be the weight function given by Lemma~\ref{lem:necessary_condition}, and let $P$
  be the minimal separated multiple of $p$. Then $\lp_{\omega}(P)$ is the minimal separated multiple of $\lp_{\omega}(p)$.
\end{Theorem}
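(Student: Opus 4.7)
The approach is to establish two divisibilities and conclude equality up to a non-zero scalar. Set $p_0 := \lp_\omega(p)$, which is separable by Lemma~\ref{lem:necessary_condition}(2), and let $Q_0$ denote its minimal separated multiple, guaranteed by~\cite[Theorem~4.2]{fried69}. Writing $P = qp$ with $q \in \K[x,y]$, we have $\lp_\omega(P) = \lp_\omega(q)\,p_0$, so $\lp_\omega(P)$ is a separated multiple of $p_0$, whence $Q_0 \mid \lp_\omega(P)$. The nontrivial direction is $\lp_\omega(P) \mid Q_0$ up to a scalar, which I will deduce from the following \emph{lifting lemma}: there exists a separated multiple $\tilde P \in \langle p\rangle$ with $\lp_\omega(\tilde P) = Q_0$. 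Granting this, minimality of $P$ among separated multiples of $p$ forces $P \mid \tilde P$, and taking $\omega$-leading parts yields $\lp_\omega(P) \mid \lp_\omega(\tilde P) = Q_0$, completing the proof.

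To prove the lifting lemma I would pass to the Puiseux-series field in $y$ over $\overline{\K}$, where $p$ factors as $p = \lc(p)\prod_{i=1}^r(x-\xi_i(y))$ with $\xi_i(y) = \gamma_i y^{\omega_x/\omega_y} + (\text{strictly lower $\omega$-weight})$ and $\gamma_1,\dots,\gamma_r$ the roots of $p_0(x,1)$. By Proposition~\ref{prop:homog}, if $N_0$ denotes the smallest positive integer such that every ratio $\gamma_i/\gamma_j$ is an $N_0$-th root of unity, then $M_0 := N_0\omega_x/\omega_y$ is a non-negative integer, $\beta := \gamma_i^{N_0}$ is independent of $i$, and $Q_0 = x^{N_0} - \beta\,y^{M_0}$ up to scalar. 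Lifting $Q_0$ to $\tilde P = \tilde f(x) - \tilde g(y)$ thus reduces to finding a monic $\tilde f \in \K[x]$ of degree $N_0$ with $\tilde f(\xi_1) = \cdots = \tilde f(\xi_r)$ as Puiseux series; Galois invariance and integrality then force the common value $\tilde g(y)$ to lie in $\K[y]$, giving the desired separated multiple.

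The construction of $\tilde f$ uses the ansatz $\tilde f(x) = x^{N_0} + a_{N_0-1}x^{N_0-1} + \cdots + a_0$, with the $a_k \in \K$ determined by requiring $\tilde f(\xi_i) - \tilde f(\xi_j) = 0$ weight by weight in order of decreasing $\omega$-weight. The top weight $N_0\omega_x$ vanishes for free, since $\xi_i^{N_0} - \xi_j^{N_0} = (\gamma_i^{N_0} - \gamma_j^{N_0})y^{M_0} + (\text{lower}) = 0 + (\text{lower})$ by the very choice of $N_0$. Each lower weight yields a finite linear system in $a_0,\dots,a_{N_0-1}$, whose consistency I would prove from the observation that for every $\ell$ with $1 \leq \ell < N_0$ the difference $\xi_i^\ell - \xi_j^\ell$ has leading part $(\gamma_i^\ell - \gamma_j^\ell)y^{\ell\omega_x/\omega_y}$, which is nonzero for at least one pair $(i,j)$ by minimality of $N_0$; this Vandermonde-type non-degeneracy, inherited from the separability of $p_0$, provides enough degrees of freedom to cancel each obstruction as it arises. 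The main technical obstacle I anticipate is confirming that this descent terminates and that the resulting $a_k$ lie in $\K$ rather than in some algebraic extension, which I would handle via the Galois action on the Puiseux roots and the techniques from~\cite{cassels} mentioned in the introduction.
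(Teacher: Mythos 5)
The easy direction of your argument is correct and matches the paper's setup: $\lp_\omega(P) = \lp_\omega(q)\,\lp_\omega(p)$ is a separated multiple of $\lp_\omega(p)$, so the minimal one, $Q_0$, divides it. The problem is in the ``lifting lemma.'' Granting it, the conclusion does follow, but the lifting lemma is not a reduction of the theorem to something simpler --- it is an equivalent restatement of the hard direction (existence of a separated multiple of $p$ whose $x$-degree equals $\deg_x Q_0$), so it needs a genuine proof, and the one you sketch has a real gap.

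Concretely: you want a monic $\tilde f$ of degree $N_0$ with $\tilde f(\xi_1)=\cdots=\tilde f(\xi_r)$ as Puiseux series in $y$. Each equation $\tilde f(\xi_i)-\tilde f(\xi_1)=0$ is an \emph{infinite} sequence of scalar conditions (one per weight), while the unknowns $a_0,\dots,a_{N_0-1}$ contribute only down to weight $0$; below the weight at which all $a_k$ have entered, the remaining conditions involve no unknowns at all. So the system is massively over-determined, and the ``Vandermonde-type non-degeneracy'' you invoke shows only that the obstruction space at each weight is nontrivial --- it does not show that the obstructions become automatically zero once the finitely many $a_k$ are fixed, which is precisely the content of the theorem. (That the residual conditions do vanish is true, but one needs the Galois structure to see it, not a dimension count.) Likewise, $\K$-rationality of the $a_k$ is deferred to ``Galois invariance'' without a concrete argument; this is fixable, but the consistency gap is not.

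The paper avoids the explicit construction entirely. Using Notation~\ref{notation:two_polys} with $\overline{\K(t)}$ embedded in Puiseux series in $t^{-1}$, it passes to the set $T\subseteq\mathbb{Z}_m\times\mathbb{Z}_n$ of pairs $(i,j)$ with $p(\alpha_i,\beta_j)=0$ and the analogous $\overline T$ for $\lp_\omega(p)$, using Lemmas~\ref{lem:Gset_divisors}--\ref{lem:closure_invariance} to translate ``smallest separated multiple'' into ``separated closure of a $G$-invariant set.'' Then the whole argument collapses to two observations: $T\subseteq\overline T$ (vanishing passes to leading parts), and $T^{\sep}=\mathbb{Z}_m\times\mathbb{Z}_n$ because $P$ is minimal; together these force $\overline T^{\sep}=\mathbb{Z}_m\times\mathbb{Z}_n$, which is exactly the claim. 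This sidesteps the consistency question by making the Galois action do the work. If you want to salvage your route, you would need to derive the solvability of your linear system from this same Galois/orbit structure, at which point you would essentially be reproving Lemmas~\ref{lem:Gset_divisors}--\ref{lem:closure_invariance}.
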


Before proving the theorem, we will establish some combinatorial tools 
for dealing with divisors of separated polynomials extending the results
of Cassels~\cite{cassels}.

\begin{Notation}\label{notation:two_polys}
    Consider a separated polynomial $f(x) - g(y)$ with $\deg_x f = m$ and 
    $\deg_y g = n$, where $m, n > 0$, and a weight function $\omega(x^i y^j) = i n + j m$.
    We introduce a new variable $t$ and consider two auxiliary equations
    \[
      f(x) = t\quad \text{ and } \quad g(y) = t.
    \]
    We solve these equations with respect to $x$ and $y$ in $\overline{\K (t)}$, the algebraic closure of $\K (t)$.
    Let the solutions be $\alpha_0, \ldots, \alpha_{m - 1}$ and $\beta_0, \ldots, \beta_{n - 1}$, respectively.
    Then every element $\pi$ of $\Gal(\overline{\K (t)} / \K (t))$, the Galois group of $\overline{\K (t)}$ over $\K (t)$, acts on $\mathbb{Z}_m \times \mathbb{Z}_n$ by
    \[
        \pi (i, j) := (i', j')
        \iff
        (\pi(\alpha_i),\pi(\beta_j))=(\alpha_{i'},\beta_{j'}).
    \]
    Let $G \subseteq \mathbf{S}_m \times \mathbf{S}_n$ be the group of permutations induced on $\mathbb{Z}_m \times \mathbb{Z}_n$ by this action.
\end{Notation}

\begin{Notation}\label{notation:slice}
  For a subset $T \subseteq \mathbb{Z}_m \times \mathbb{Z}_n$, and $(i,j)\in\mathbb{Z}_m\times\mathbb{Z}_n$, we introduce
\[
    T_{i, \ast} := \{k \mid (i, k) \in T\} \text{ and } T_{\ast, j} := \{k \mid (k, j) \in T\}.
\]
\end{Notation}

\begin{Lemma}\label{lem:rows_cols_size}
  Let $T \subseteq \mathbb{Z}_m \times \mathbb{Z}_n$ be a $G$-invariant subset.
  Then $|T_{0, \ast}| = |T_{1, \ast}| = \ldots = |T_{m - 1, \ast}|$ and
  $|T_{\ast, 0}| = |T_{\ast, 1}| = \ldots = |T_{\ast, n - 1}|$.
\end{Lemma}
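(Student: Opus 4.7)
The plan is to exploit transitivity of the Galois action on each coordinate separately. Observe first that $G$ sits inside $\mathbf{S}_m \times \mathbf{S}_n$ because each $\pi \in \Gal(\overline{\K(t)}/\K(t))$ permutes the $\alpha_i$ among themselves and the $\beta_j$ among themselves, independently of one another; hence every element of $G$ can be written as a pair $(\pi_1, \pi_2)$. The claim that $|T_{i,*}|$ is independent of $i$ will then follow once we know that for every pair $i, i' \in \mathbb{Z}_m$ there exists $\pi \in G$ with $\pi_1(i) = i'$: $G$-invariance of $T$ forces $\pi_2(T_{i,*}) \subseteq T_{i',*}$, and applying the same argument to $\pi^{-1}$ yields the reverse inclusion, so $\pi_2$ restricts to a bijection $T_{i,*} \to T_{i',*}$.

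Thus the heart of the proof is to show that the projection $G \to \mathbf{S}_m$ is transitive, or equivalently, that $f(x) - t$ is irreducible in $\K(t)[x]$. I would establish this by viewing $f(x) - t$ as an element of $\K[x][t]$: it has $t$-degree one with $t$-coefficient $-1$, hence is primitive and irreducible in $\K[x, t]$. Its leading coefficient in $x$ is a nonzero constant of $\K$, so it is also primitive as an element of $\K[t][x]$, and Gauss's lemma lifts irreducibility to $\K(t)[x]$. This gives transitivity of the projection $G \to \mathbf{S}_m$ on $\{\alpha_0, \ldots, \alpha_{m-1}\}$. The identical argument applied to $g(y) - t$ yields transitivity of the projection $G \to \mathbf{S}_n$, and hence the analogous equal-size statement for columns by the symmetric bijection argument.

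I expect the main obstacle to be the irreducibility / transitivity step above, although it amounts to a standard trick. Once this is in hand, the bijection argument between rows (and between columns) is purely combinatorial: it uses only the $G$-invariance of $T$ and the fact that the $G$-action on $\mathbb{Z}_m \times \mathbb{Z}_n$ splits as a product of an action on $\mathbb{Z}_m$ and an action on $\mathbb{Z}_n$.
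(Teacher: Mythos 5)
Your proof is correct and follows essentially the same approach as the paper's: establish irreducibility of $f(x)-t$ over $\K(t)$ via a Gauss's-lemma argument, deduce transitivity of the Galois action on the $\alpha_i$, and then use $G$-invariance of $T$ to move one row bijectively onto another. The only cosmetic difference is that you spell out the primitivity step and phrase the size comparison as a single bijection $\pi_2\colon T_{i,*}\to T_{i',*}$, whereas the paper does it as two inequalities from $\pi$ and $\pi^{-1}$.
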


\begin{proof}
  We show that $|T_{0, \ast}| = |T_{1, \ast}|$, the rest is analogous.
  First, we observe that $f(x) - t$ is irreducible over $\K(t)$.
  If it was not, it would be reducible over $\K[t]$ due to Gauss's lemma.
  The latter is impossible because $f(x) - t$ is linear in $t$ and does not have factors in $\K[x]$.
  The irreducibility of $f(x) - t$ implies that its Galois group acts transitively on the roots.
  In particular, there exists $\pi \in \Gal(\overline{\K(t)} / \K(t))$ such that $\pi(\alpha_0) = \alpha_1$.
  Hence, $\pi$ maps $T_{0, \ast}$ to $T_{1, \ast}$, and we have $|T_{0, \ast}| \leqslant |T_{1, \ast}|$.
  The reverse inequality is analogous.
\end{proof}

%%%%%%%%%%%%%%%%%%

\begin{Lemma}[{cf. \cite[p. 9-10]{cassels}}]\label{lem:Gset_divisors}
  Let $T \subseteq \mathbb{Z}_m \times \mathbb{Z}_n$ be a $G$-invariant subset.
  There exists a divisor $p$ of $f(x) - g(y)$, unique up to a multiplicative constant, such that 
  \begin{equation}\label{eq:p_property}
    T = \{(i, j) \in \mathbb{Z}_m \times \mathbb{Z}_n \mid p(\alpha_i, \beta_j) = 0\}.
   \end{equation}
\end{Lemma}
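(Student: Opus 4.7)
The strategy is to factor $F := f(x) - g(y)$ into irreducibles in $\K[x,y]$ and put these factors in bijection with the $G$-orbits on $\mathbb{Z}_m \times \mathbb{Z}_n$, each factor vanishing precisely on its associated orbit. The required divisor $p$ will then be the product of those irreducible factors whose orbits lie in~$T$.

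First I would verify that $F$ is squarefree. Since $\partial F/\partial x = f'(x)$ and $\partial F/\partial y = -g'(y)$ are both nonzero in characteristic zero, any repeated factor must divide both and hence lie in $\K[x] \cap \K[y] = \K$. Write $F = c \prod_l p_l$ with pairwise non-associate irreducibles $p_l \in \K[x,y]$, and set
\[
   S_l := \{(i,j) \in \mathbb{Z}_m \times \mathbb{Z}_n : p_l(\alpha_i, \beta_j) = 0\}.
\]
Since $F(\alpha_i, \beta_j) = 0$ and $F$ is squarefree, the $S_l$ partition $\mathbb{Z}_m \times \mathbb{Z}_n$, and each $S_l$ is $G$-invariant because the coefficients of $p_l$ are fixed by $\Gal(\overline{\K(t)}/\K(t))$.

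The key step — which I expect to be the main obstacle — is to show that each $S_l$ is a single $G$-orbit. Let $A_l := \K[x,y]/(p_l)$, a domain, with images $\bar x, \bar y$. Since $p_l \mid F$, we have $f(\bar x) = g(\bar y)$ in $A_l$; call this common value $\tau$, transcendental over~$\K$. Then $L_l := \mathrm{Frac}(A_l)$ is a finite algebraic extension of $\K(\tau)$. Identifying $\tau$ with $t$, the $\K(\tau)$-algebra embeddings $L_l \hookrightarrow \overline{\K(t)}$ are in bijection with $S_l$: each is determined by the images of $\bar x, \bar y$, which must be a pair $(\alpha_i, \beta_j)$ satisfying $p_l(\alpha_i, \beta_j) = 0$; and by separability in characteristic zero, the number of such embeddings equals $[L_l : \K(\tau)] = |S_l|$. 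The group $\Gal(\overline{\K(t)}/\K(t))$ acts transitively on these embeddings — any $\K(t)$-isomorphism between two images $\phi_1(L_l)$ and $\phi_2(L_l)$ extends to an automorphism of $\overline{\K(t)}$ — so the induced action of $G$ on $S_l$ is transitive. This is essentially the content of Cassels's argument on pp.~9--10 that the statement cites.

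Finally, given a $G$-invariant $T$, write it as a disjoint union $T = S_{l_1} \sqcup \cdots \sqcup S_{l_r}$ of orbits and set $p := p_{l_1} \cdots p_{l_r}$. Then $p \mid F$ and equation~(\ref{eq:p_property}) holds by construction. Uniqueness is immediate from unique factorization: any divisor $p'$ of $F$ satisfying~(\ref{eq:p_property}) is a scalar multiple of a product of a subfamily of the $p_l$'s, and the vanishing condition forces this subfamily to be exactly $\{p_{l_1}, \ldots, p_{l_r}\}$.
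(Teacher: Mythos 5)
Your proof is correct, but it takes a noticeably different route from the paper's.  The paper never factors $f(x)-g(y)$ into irreducibles; instead it constructs $p$ directly from $T$: starting from the slice $T_{0,\ast}=\{j_1,\dots,j_s\}$, it observes that $\Gal(\overline{\K(t)}/\K(\alpha_0))$ permutes $\{\beta_{j_1},\dots,\beta_{j_s}\}$, so that $\prod_k(y-\beta_{j_k})$ lies in $\K(\alpha_0)[y]$, then in $\K[\alpha_0,y]$ by algebraic independence of $\alpha_0$ and $y$, and finally in $\K[x,y]$ after substituting $x$ for $\alpha_0$; transitivity of the Galois action on the $\alpha_i$'s then gives property~\eqref{eq:p_property}, and uniqueness follows by running the same argument with $p$ and $\tilde p$ swapped.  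Your argument is more structural: you establish the stronger fact that the sets $S_l$ attached to the irreducible factors are exactly the $G$-orbits on $\mathbb{Z}_m\times\mathbb{Z}_n$, which makes the lemma (and its uniqueness part) transparent; the price is that you need the squarefreeness of $F$ and a bit of function-field theory (identifying $S_l$ with the $\K(t)$-embeddings of $\mathrm{Frac}(\K[x,y]/(p_l))$ into $\overline{\K(t)}$).  This orbit decomposition is not stated in the paper, but it is implicit in the combination of this lemma with Lemma~\ref{lem:rows_cols_size}, and it is morally what Cassels does.

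One step is asserted before you have the tools to justify it: the claim that the $S_l$ \emph{partition} $\mathbb{Z}_m\times\mathbb{Z}_n$.  Squarefreeness of $F$ alone does not rule out that a pair $(\alpha_i,\beta_j)$ lies on two distinct irreducible components.  The clean way to see disjointness is exactly the mechanism you set up two sentences later: if $p_l$ and $p_{l'}$ both vanished at $(\alpha_i,\beta_j)$, the kernel of $\K[x,y]\to\K[\alpha_i,\beta_j]$ would be a prime containing two non-associate irreducibles, hence maximal, forcing $\alpha_i,\beta_j$ to be algebraic over $\K$ — impossible.  (Alternatively, once you know $|S_l|=[L_l:\K(t)]$, the count $\sum_l[L_l:\K(t)]=m\sum_l\deg_y p_l=mn$ together with the covering property gives disjointness.)  You also quietly use that no $p_l$ lies in $\K[x]$ or $\K[y]$ when you treat $\tau=f(\bar x)$ as transcendental over $\K$; this holds because a univariate factor of $f(x)-g(y)$ would force $g$ (or $f$) to be constant, contradicting $m,n>0$, but it is worth saying.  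Neither point is a flaw in the idea — both are one-line fixes — but as written the partition claim is not actually supported by the reason you give for it.
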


\begin{proof}
  \textit{Existence.}
  Let $T_{0, \ast} = \{ j_1, \ldots, j_s \}$.
  Since $f(\alpha_0) = t$, we have $\K (\alpha_0) \supseteq \K (t)$, 
  so every element of $\Gal(\overline{\K (t)} / \K (\alpha_0))$ leaves 
  $T$ invariant. If $\alpha_0$ is fixed, then $\beta_{j_1}, \ldots, \beta_{j_s}$ 
  are permuted.
  Therefore, the polynomial 
  $(y - \beta_{j_1})(y - \beta_{j_2}) \dots (y - \beta_{j_s})$
  is invariant under the action 
  of $\Gal(\overline{\K (t)} / \K (\alpha_0))$. 
  Hence, by the fundamental theorem of Galois theory, it is a polynomial in $\K (\alpha_0)[y]$.
  Since, by construction, it divides $f(\alpha_0) - g(y)$ over $\K (\alpha_0)$, 
  and $\alpha_0$ and $y$ are algebraically independent, 
  it in fact belongs to $\K [\alpha_0, y]$. 
  Replacing $\alpha_0$ by $x$, we find a polynomial $p\in \K [x, y]$, 
  which divides $f(x) - g(y)$ in $\K [x,y]$.

  Let $(i, j) \in \mathbb{Z}_m \times \mathbb{Z}_n$. 
  Since $\Gal(\overline{\K (t)} / \K (t))$ acts transitively on the roots of $f(x) - t$ (see the proof of Lemma~\ref{lem:rows_cols_size}), there is an automorphism $\pi$ 
  with $\pi (\alpha_i) = \alpha_0$.
  Let $\beta_{j'} = \pi(\beta_j)$.
  We then have
  \[
    p(\alpha_i, \beta_j) = 0 \iff p(\alpha_0, \beta_{j'}) = 0 \iff j' \in T_{0, \ast} \iff (i, j) \in T.
  \]

  \textit{Uniqueness.}
  It remains to prove that $p$ is unique up to a multiplicative constant. Assume that $\tilde{p}$ is another divisor of $f(x) - g(y)$ such that $\tilde{p}(\alpha_i,\beta_j) = 0$ for all $(i,j)\in T$. 
  The same argument which proved that $p$ is a divisor of $f(x) - g(y)$ applies to show that $p$ is a divisor of $\tilde{p}$ in $\K [x,y]$, and vice versa. Hence, they only differ by a multiplicative constant.
\end{proof}

%%%%%%%%%%%%%%%%%%

\begin{Lemma}\label{lem:combinatorial_separatedness}
  Let $T \subseteq \mathbb{Z}_m \times \mathbb{Z}_n$ be a $G$-invariant subset.
  The unique factor $p$ corresponding to $T \subseteq \mathbb{Z}_m \times \mathbb{Z}_n$ (see Lemma~\ref{lem:Gset_divisors}) is separated if and only if
  \begin{equation}\label{eq:sep_condition}
        \forall\ i,j\in\mathbb{Z}_m : (T_{i, \ast} \cap T_{j, \ast} = \varnothing) \text{ or } (T_{i, \ast} = T_{j, \ast})
  \end{equation}
\end{Lemma}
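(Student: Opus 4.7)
The plan is to prove the two directions separately. The forward direction is a one-line calculation: writing $p=\phi(x)-\psi(y)$, the condition $p(\alpha_i,\beta_j)=0$ becomes $\phi(\alpha_i)=\psi(\beta_j)$, so $T_{i,\ast}=\{j:\psi(\beta_j)=\phi(\alpha_i)\}$ is a level set of $\psi$, and two level sets of a function are either equal or disjoint.

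For the converse, I first derive an analogous column version of the dichotomy: if $i\in T_{\ast,j}\cap T_{\ast,j'}$, then $j,j'\in T_{i,\ast}$, and for any $i'$ with $j\in T_{i',\ast}$ the row dichotomy forces $T_{i,\ast}=T_{i',\ast}$, hence $j'\in T_{i',\ast}$ and $i'\in T_{\ast,j'}$. Define equivalences $\sim_m$ on $\mathbb{Z}_m$ and $\sim_n$ on $\mathbb{Z}_n$ by equality of the corresponding rows and columns, respectively. Both dichotomies combine to show that each nonempty $T_{i,\ast}$ is exactly a single $\sim_n$-class, yielding a bijection $\tau$ from $\sim_m$-classes onto those $\sim_n$-classes meeting $T$, given by $\tau([i]_{\sim_m})=T_{i,\ast}$. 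Since $T$ is $G$-invariant, $\tau$ is $G$-equivariant.

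The key step is to exhibit a separated divisor $q(x,y)=h(x)-k(y)\in\K[x,y]$ of $f(x)-g(y)$ whose vanishing pattern on $\{(\alpha_i,\beta_j)\}$ is exactly $T$; the uniqueness part of Lemma~\ref{lem:Gset_divisors} then forces $p$ to be a scalar multiple of $q$ and hence separated. Assuming $T\neq\varnothing$ (otherwise $p$ is a nonzero constant and trivially separated), fix $(0,j_0)\in T$ and let $G_1\subseteq G$ be the setwise stabilizer of $[0]_{\sim_m}$; by $G$-equivariance of $\tau$ it also stabilizes $T_{0,\ast}=[j_0]_{\sim_n}$. Inside the Galois closure $L$ of $\K(\alpha_0,\beta_{j_0})$ over $\K(t)$, the fixed field $L^{G_1}$ sits in $\K(\alpha_0)\cap\K(\beta_{j_0})$ and has degree $|G|/|G_1|$ over $\K(t)$. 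Using L\"uroth's theorem together with the fact that $f$ and $g$ are polynomials (so $\alpha_0\mapsto t$ and $\beta_{j_0}\mapsto t$ are totally ramified at infinity), I would produce $h\in\K[x]$, $k\in\K[y]$ and $\Phi,\Psi\in\K[u]$ with $f=\Phi\circ h$, $g=\Psi\circ k$, and the fibers of $h$ on $\{\alpha_i\}$ (resp.\ of $k$ on $\{\beta_j\}$) equal to the $\sim_m$-classes (resp.\ $\sim_n$-classes). A preparatory affine substitution on $k$---justified because the M\"obius transformation relating two polynomial generators of the same rational function field must be affine---brings the two generators into agreement, $k(\beta_{j_0})=h(\alpha_0)$, which forces $\Phi=\Psi$.

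The main obstacle is this polynomial decomposition step, i.e., verifying that $L^{G_1}$ is generated over $\K$ by a polynomial expression in each of $\alpha_0$ and $\beta_{j_0}$, and that a single outer polynomial $\Phi$ works for both $f$ and $g$. Once it is in place, $q(x,y)=h(x)-k(y)$ divides $\Phi(h(x))-\Phi(k(y))=f(x)-g(y)$, and the $G$-equivariance of $\tau$ combined with the fiber descriptions of $h$ and $k$ shows that $h(\alpha_i)=k(\beta_j)$ precisely when $[i]_{\sim_m}$ and $[j]_{\sim_n}$ correspond under $\tau$---equivalently when $(i,j)\in T$. Thus $q$ has vanishing set exactly $T$, and Lemma~\ref{lem:Gset_divisors} concludes.
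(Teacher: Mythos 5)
Your proof of the ``separated implies condition'' direction is correct and essentially coincides with the paper's: writing $p=f_0(x)-g_0(y)$, one sees directly that rows $T_{i,\ast}$ are level sets, hence pairwise equal or disjoint.

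For the converse you take a genuinely different and much heavier route, and there is a real gap in it that you yourself flag. You reduce the problem to showing that the fixed field $L^{G_1}$ is generated over $\K(t)$ by a \emph{polynomial} expression $h(\alpha_0)$ in $\alpha_0$ and, simultaneously, by a polynomial expression $k(\beta_{j_0})$ in $\beta_{j_0}$, with moreover $f=\Phi\circ h$ and $g=\Phi\circ k$ for a common outer polynomial $\Phi$. This is precisely the hard content of the Ritt/Fried--MacRae theory of polynomial decompositions: L\"uroth only hands you a rational generator, and upgrading it to a polynomial one (using total ramification at infinity) and then aligning $h$ and $k$ so that the \emph{same} $\Phi$ works for both $f$ and $g$ are nontrivial steps that you do not carry out. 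In effect you are reproving the kind of statement the paper cites from~\cite{fried69}, which defeats the purpose of this lemma. The paper's argument is far more elementary and self-contained: it writes $p = y^s + a_{s-1}(x)y^{s-1}+\dots+a_0(x)$ with $a_i(\alpha_j)$ the elementary symmetric functions of $\{\beta_k : k\in T_{j,\ast}\}$, derives the degree bound $\deg_x a_i \le (s-i)m/n$ from $\lp_\omega(p)\mid ax^m-by^n$, and then uses Lemma~\ref{lem:rows_cols_size} together with condition~\eqref{eq:sep_condition} to show that $a_j(x)-a_j(\alpha_0)$ has $\ell=ms/n$ distinct roots while having degree $<\ell$ for $0<j<s$, forcing $a_j$ to be constant. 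No Galois closure, L\"uroth, or polynomial decomposition is needed. I'd recommend you look for a direct counting argument along these lines rather than trying to build a separated divisor from scratch; the construction route only works if you import a substantial theorem you have not established.
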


\begin{proof}
  Assume that $T$ satisfies~\eqref{eq:sep_condition}, and let $T_{0, \ast} = \{j_1, \ldots, j_s\}$.
  Consider the corresponding polynomial $p$ constructed in the proof of Lemma~\ref{lem:Gset_divisors}, 
  which is of the form
  \[
  p(x,y) = y^s + a_{s - 1}(x) y^{s - 1} + \dots + a_0(x),
  \]
  where, for every $0 \leqslant i < s$ and $0 \leqslant j < m$, $a_i(\alpha_j)$ is (up to sign) the $s - i$-th elementary symmetric polynomial in $\{\beta_k \mid k \in T_{j, \ast}\}$.
  
  Since $p \mid f(x) - g(y)$, we have 
  $
  \lp_{\omega}(p) \mid \lp_{\omega} (f(x) - g(y)) = ax^m - by^n$, 
  with $a,b\in\K\setminus\{0\}$.
  Hence, $y^s$ belongs to $\lp_{\omega}(p)$, and so 
  $\omega(a_{i}(x)y^i) \leqslant \omega(y^s) = ms$ for all $i\in\{0,\dots,s-1\}$,
  This implies
  \[
  \deg_x a_i(x) \leqslant \frac{ms - mi}{n} = (s - i)\frac{m}{n}.
  \]
  
  Since $T$ is the disjoint union of the $T_{i,\ast}$'s and of the $T_{\ast,j}$'s, respectively, whose cardinality, by Lemma~\ref{lem:rows_cols_size}, does not depend on $i$ and $j$, and $T_{0, \ast}$, by definition, consists of $s$ elements, we find that $ms = |T| = n |T_{\ast,j_1}|$, in particular $\ell := |T_{\ast, j_1}| = \frac{ms}{n}$. 
  Hence there exist $0 = i_1 < i_2 < \ldots < i_\ell < m$ such that $j_1\in T_{i_1,\ast}\cap\ldots\cap T_{i_\ell, \ast}$ and so, by \eqref{eq:sep_condition}, $T_{i_1, \ast} = \ldots =  T_{i_\ell, \ast}$.
  This shows that the polynomial $a_j(x) - a_j(\alpha_0)$ has at least $\ell$ pairwise distinct roots, $\alpha_{i_1}, \ldots, \alpha_{i_\ell}$, while it has degree less than $\ell$ for $0 < j < s$.
  Hence, it is the zero polynomial, and $a_j(x)$ is a constant (which we denote by $a_j$).
  Therefore, $p$ is separated and of the form $p(x,y) = f_0(x) - g_0(y)$ with
  $f_0(x) = a_0(x)$ and $g_0(y) = -(y^s + a_{s-1}y^{s-1} + \dots + a_1 y)$.
  
  To prove the other implication, let $p(x,y) = f_0(x) - g_0(y)$ be a separated factor of $f(x) - g(y)$. It is sufficient to show that
  \[
  (i ,j),\; (i', j),\; (i, j')\in T \implies (i', j')\in T. 
  \]
  Indeed, $(i, j),\; (i', j)\in T$ implies that $f_0(\alpha_i) = f_0(\alpha_{i'})$, so that $f_0(\alpha_i) - g_0(\beta_{j'}) = 0$ implies that $f_0(\alpha_{i'}) - g_0(\beta_{j'}) = 0$, i.e. $(i', j') \in T$.
\end{proof}

%%%%%%%%%%%%%%%%%%%%%%%%%%%%%%%%%

Lemma~\ref{lem:combinatorial_separatedness} motivates the following definition.

\begin{Definition}
  \begin{enumerate}
       \item A subset $T \subseteq \mathbb{Z}_m \times \mathbb{Z}_n$ is called \emph{separated} if it satisfies~\eqref{eq:sep_condition}, that is
        \[
          \forall\ i,j\in\mathbb{Z}_m : (T_{i, *} \cap T_{j, *} = \varnothing) \text{ or } (T_{i, *} = T_{j, *}).
        \]
  
       \item The intersection of all separated subsets containing $T \subseteq \mathbb{Z}_m \times \mathbb{Z}_n$ is called the \emph{separated closure} of $T$ and denoted by $T^{\mathrm{sep}}$.
        Notice that the separated closure is separated.
  \end{enumerate}
\end{Definition}

%%%%%%%%%%%%%%%%%%%%%%%%%%%%%%%%%%%%

\begin{Example}
\begin{enumerate}
\item
Let $f(x) = x^4$ and $g(y) = y^4 + 2 y^2 + 1$. The group of permutations on pairs of roots of $f(x) - t$ and $g(y) - t$ is generated by
$((0 1 2 3),(0 1 2 3)), ((0 3 2 1),(0 3)(1 2))$ and $(id,(0 2))$. According to $f(x) - g(y)$ having two separated irreducible factors, $x^2 - y^2 -1$ and $x^2 + y^2 + 1$, we find that there are two orbits, each of them forming a separated set (Figure~\ref{fig:fig2}).

    \item 
Let $f(x) - g(y) = x^6 - y^6$. 
Let $t^{1/6} \in \overline{\mathbb{C}(t)}$ be any 6th root of $t$, and let $\epsilon$ be a primitive 6th root of unity.
Then the polynomials $f(x) - t$ and $g(y) - t$ have the same roots, namely:
\[
  \alpha_i = \beta_i = \epsilon^i t^{1/6},\quad  i\in\{0,\dots,5\}.
\]
The Galois group of $\overline{\mathbb{C}(t)}$ permutes these elements cyclically,
so the induced action on $\mathbb{Z}_6^2$ is generated by $((0 1 2 3 4 5), (0 1 2 3 4 5))$.
Figure~\ref{fig:fig1} shows the sets $T$ for the various factors of $x^6-y^6$.
Observe that $T$ is separated if and only if the corresponding factor is separated.
Observe also that multiplying two factors corresponds to taking the union of the corresponding sets~$T$.
\end{enumerate}
\end{Example}

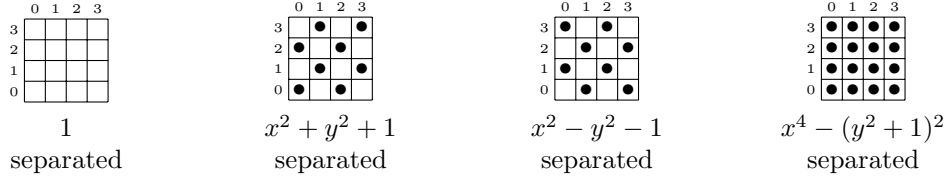
\begin{figure}
    \begin{center}
  \begin{tikzpicture}[scale=.275]
    \draw[xshift=-.5cm,yshift=-.5cm] (0,0) grid (4,4);
    \foreach\x in {0,1,2,3} \draw (-1,\x) node {\tiny$\x$} (\x,4) node {\tiny$\x$};
    \draw (1.5,-.75) node[below] {\vbox{\clap{$\mathstrut 1$}\kern0pt\clap{\strut separated}}};
  \end{tikzpicture}\hfil
  \begin{tikzpicture}[scale=.275]
    \draw[xshift=-.5cm,yshift=-.5cm] (0,0) grid (4,4);
    \foreach\x in {0,1,2,3} \draw (-1,\x) node {\tiny$\x$} (\x,4) node {\tiny$\x$};
    \foreach \x/\y in {0/0, 1/1, 2/2, 3/3, 2/0, 3/1, 0/2, 1/3} \draw (\x,\y) node {$\bullet$}; % x^2 + y^2 + 1
    \draw (1.5,-.75) node[below] {\vbox{\clap{$\mathstrut x^2+y^2+1$}\kern0pt\clap{\strut separated}}};
  \end{tikzpicture}\hfil
  \begin{tikzpicture}[scale=.275]
    \draw[xshift=-.5cm,yshift=-.5cm] (0,0) grid (4,4);
    \foreach\x in {0,1,2,3} \draw (-1,\x) node {\tiny$\x$} (\x,4) node {\tiny$\x$};
    \foreach \x/\y in {1/0, 2/1, 3/2, 0/3, 3/0, 0/1, 1/2, 2/3} \draw (\x,\y) node {$\bullet$}; % x^2 - y^2 - 1
    \draw (1.5,-.75) node[below] {\vbox{\clap{$\mathstrut x^2-y^2-1$}\kern0pt\clap{\strut separated}}};
  \end{tikzpicture}\hfil
  \begin{tikzpicture}[scale=.275]
    \draw[xshift=-.5cm,yshift=-.5cm] (0,0) grid (4,4);
    \foreach\x in {0,1,2,3} \draw (-1,\x) node {\tiny$\x$} (\x,4) node {\tiny$\x$};
    \foreach \x/\y in {0/0, 1/1, 2/2, 3/3, 2/0, 3/1, 0/2, 1/3} \draw (\x,\y) node {$\bullet$}; % x^2 + y^2 + 1
    \foreach \x/\y in {1/0, 2/1, 3/2, 0/3, 3/0, 0/1, 1/2, 2/3} \draw (\x,\y) node {$\bullet$}; % x^2 - y^2 - 1
    \draw (1.5,-.75) node[below] {\vbox{\clap{$\mathstrut x^4-(y^2+1)^2$}\kern0pt\clap{\strut separated}}};
  \end{tikzpicture}
\end{center}
    
    \caption{The factors of $x^4-(y^2+1)^2$ in $\set Q[x,y]$ and the sets $T\subseteq\set Z_4^2$ 
    corresponding to them.}
    \label{fig:fig2}
\end{figure}

%\bigbreak
\begin{figure}[h]
\begin{center}
\footnotesize
\begin{tikzpicture}[scale=.275]
  \draw[xshift=-.5cm,yshift=-.5cm] (0,0) grid (6,6);
  \foreach\x in {0,1,2,3,4,5} \draw (-1,\x) node {\tiny$\x$} (\x,6) node {\tiny$\x$};
  \draw (2.5,-.75) node[below] {\vbox{\clap{$\mathstrut 1$}\kern0pt\clap{\strut separated}}};
\end{tikzpicture}\hfil
\begin{tikzpicture}[scale=.275]
  \draw[xshift=-.5cm,yshift=-.5cm] (0,0) grid (6,6);
  \foreach\x in {0,1,2,3,4,5} \draw (-1,\x) node {\tiny$\x$} (\x,6) node {\tiny$\x$};
  \foreach\x/\y in {0/0, 1/1, 2/2, 3/3, 4/4, 5/5} \draw (\x,\y) node {$\bullet$}; % x - y
  \draw (2.5,-.75) node[below] {\vbox{\clap{$\mathstrut x-y$}\kern0pt\clap{\strut separated}}};
\end{tikzpicture}\hfil
\begin{tikzpicture}[scale=.275]
  \draw[xshift=-.5cm,yshift=-.5cm] (0,0) grid (6,6);
  \foreach\x in {0,1,2,3,4,5} \draw (-1,\x) node {\tiny$\x$} (\x,6) node {\tiny$\x$};
  \foreach\x/\y in {0/3, 1/4, 2/5, 3/0, 4/1, 5/2} \draw (\x,\y) node {$\bullet$}; % x + y 
  \draw (2.5,-.75) node[below] {\vbox{\clap{$\mathstrut x+y$}\kern0pt\clap{\strut separated}}};
\end{tikzpicture}\hfil
\begin{tikzpicture}[scale=.275]
  \draw[xshift=-.5cm,yshift=-.5cm] (0,0) grid (6,6);
  \foreach\x in {0,1,2,3,4,5} \draw (-1,\x) node {\tiny$\x$} (\x,6) node {\tiny$\x$};
  \foreach\x/\y in {0/0, 1/1, 2/2, 3/3, 4/4, 5/5} \draw (\x,\y) node {$\bullet$}; % x - y
  \foreach\x/\y in {0/3, 1/4, 2/5, 3/0, 4/1, 5/2} \draw (\x,\y) node {$\bullet$}; % x + y 
  \draw (2.5,-.75) node[below] {\vbox{\clap{$\mathstrut x^2-y^2$}\kern0pt\clap{\strut separated}}};
\end{tikzpicture}

\begin{tikzpicture}[scale=.275]
  \fill[lightgray] (-.5,2.5) rectangle (5.5, 3.5) (-.5,4.5) rectangle (5.5,5.5);
  \draw[xshift=-.5cm,yshift=-.5cm] (0,0) grid (6,6);
  \foreach\x in {0,1,2,3,4,5} \draw (-1,\x) node {\tiny$\x$} (\x,6) node {\tiny$\x$};
  \foreach\x/\y in {2/0, 3/1, 4/2, 5/3, 0/4, 1/5, 4/0, 5/1, 0/2, 1/3, 2/4, 3/5} \draw (\x,\y) node {$\bullet$}; % x^2 + x y + y^2
  \draw (2.5,-.75) node[below] {\vbox{\clap{$\mathstrut x^2+xy+y^2$}\kern0pt\clap{\strut not separated}}};
\end{tikzpicture}\hfil
\begin{tikzpicture}[scale=.275]
  \draw[xshift=-.5cm,yshift=-.5cm] (0,0) grid (6,6);
  \foreach\x in {0,1,2,3,4,5} \draw (-1,\x) node {\tiny$\x$} (\x,6) node {\tiny$\x$};
  \foreach\x/\y in {0/0, 1/1, 2/2, 3/3, 4/4, 5/5} \draw (\x,\y) node {$\bullet$}; % x - y
  \foreach\x/\y in {2/0, 3/1, 4/2, 5/3, 0/4, 1/5, 4/0, 5/1, 0/2, 1/3, 2/4, 3/5} \draw (\x,\y) node {$\bullet$}; % x^2 + x y + y^2
  \draw (2.5,-.75) node[below] {\vbox{\clap{$\mathstrut x^3-y^3$}\kern0pt\clap{\strut separated}}};
\end{tikzpicture}\hfil
\begin{tikzpicture}[scale=.275]
  \fill[lightgray] (-.5,3.5) rectangle (5.5, 5.5);
  \draw[xshift=-.5cm,yshift=-.5cm] (0,0) grid (6,6);
  \foreach\x in {0,1,2,3,4,5} \draw (-1,\x) node {\tiny$\x$} (\x,6) node {\tiny$\x$};
  \foreach\x/\y in {0/3, 1/4, 2/5, 3/0, 4/1, 5/2} \draw (\x,\y) node {$\bullet$}; % x + y 
  \foreach\x/\y in {2/0, 3/1, 4/2, 5/3, 0/4, 1/5, 4/0, 5/1, 0/2, 1/3, 2/4, 3/5} \draw (\x,\y) node {$\bullet$}; % x^2 + x y + y^2
  \draw (2.5,-.75) node[below] {\vbox{\clap{$\mathstrut x^3{+}2 x^2 y{+}2 x y^2{+}y^3$}\kern0pt\clap{\strut not separated}}};
\end{tikzpicture}\hfil
\begin{tikzpicture}[scale=.275]
  \fill[lightgray] (-.5,3.5) rectangle (5.5, 5.5);
  \draw[xshift=-.5cm,yshift=-.5cm] (0,0) grid (6,6);
  \foreach\x in {0,1,2,3,4,5} \draw (-1,\x) node {\tiny$\x$} (\x,6) node {\tiny$\x$};
  \foreach\x/\y in {0/0, 1/1, 2/2, 3/3, 4/4, 5/5} \draw (\x,\y) node {$\bullet$}; % x - y
  \foreach\x/\y in {0/3, 1/4, 2/5, 3/0, 4/1, 5/2} \draw (\x,\y) node {$\bullet$}; % x + y 
  \foreach\x/\y in {2/0, 3/1, 4/2, 5/3, 0/4, 1/5, 4/0, 5/1, 0/2, 1/3, 2/4, 3/5} \draw (\x,\y) node {$\bullet$}; % x^2 + x y + y^2
  \draw (2.5,-.75) node[below] {\vbox{\clap{$\mathstrut x^4{+}x^3 y{-}x y^3{-}y^4$}\kern0pt\clap{\strut not separated}}};
\end{tikzpicture}

\begin{tikzpicture}[scale=.275]
  \fill[lightgray] (-.5,2.5) rectangle (5.5, 3.5) (-.5,4.5) rectangle (5.5,5.5);
  %\fill[lightgray] (-.5,3.5) rectangle (5.5, 5.5);
  \draw[xshift=-.5cm,yshift=-.5cm] (0,0) grid (6,6);
  \foreach\x in {0,1,2,3,4,5} \draw (-1,\x) node {\tiny$\x$} (\x,6) node {\tiny$\x$};
  \foreach\x/\y in {1/0, 2/1, 3/2, 4/3, 5/4, 0/5, 5/0, 0/1, 1/2, 2/3, 3/4, 4/5} \draw (\x,\y) node {$\bullet$}; % x^2 - x y + y^2
  \draw (2.5,-.75) node[below] {\vbox{\clap{$\mathstrut x^2-xy+y^2$}\kern0pt\clap{\strut not separated}}};
\end{tikzpicture}\hfil
\begin{tikzpicture}[scale=.275]
  \fill[lightgray] (-.5,3.5) rectangle (5.5, 5.5);
  \draw[xshift=-.5cm,yshift=-.5cm] (0,0) grid (6,6);
  \foreach\x in {0,1,2,3,4,5} \draw (-1,\x) node {\tiny$\x$} (\x,6) node {\tiny$\x$};
  \foreach\x/\y in {0/0, 1/1, 2/2, 3/3, 4/4, 5/5} \draw (\x,\y) node {$\bullet$}; % x - y
  \foreach\x/\y in {1/0, 2/1, 3/2, 4/3, 5/4, 0/5, 5/0, 0/1, 1/2, 2/3, 3/4, 4/5} \draw (\x,\y) node {$\bullet$}; % x^2 - x y + y^2
  \draw (2.5,-.75) node[below] {\vbox{\clap{$\mathstrut x^3{-}2 x^2 y{+}2 x y^2{-}y^3$}\kern0pt\clap{\strut not separated}}};
\end{tikzpicture}\hfil
\begin{tikzpicture}[scale=.275]
  \draw[xshift=-.5cm,yshift=-.5cm] (0,0) grid (6,6);
  \foreach\x in {0,1,2,3,4,5} \draw (-1,\x) node {\tiny$\x$} (\x,6) node {\tiny$\x$};
  \foreach\x/\y in {0/3, 1/4, 2/5, 3/0, 4/1, 5/2} \draw (\x,\y) node {$\bullet$}; % x + y 
  \foreach\x/\y in {1/0, 2/1, 3/2, 4/3, 5/4, 0/5, 5/0, 0/1, 1/2, 2/3, 3/4, 4/5} \draw (\x,\y) node {$\bullet$}; % x^2 - x y + y^2
  \draw (2.5,-.75) node[below] {\vbox{\clap{$\mathstrut x^3+y^3$}\kern0pt\clap{\strut separated}}};
\end{tikzpicture}\hfil
\begin{tikzpicture}[scale=.275]
  \fill[lightgray] (-.5,3.5) rectangle (5.5, 5.5);
  \draw[xshift=-.5cm,yshift=-.5cm] (0,0) grid (6,6);
  \foreach\x in {0,1,2,3,4,5} \draw (-1,\x) node {\tiny$\x$} (\x,6) node {\tiny$\x$};
  \foreach\x/\y in {0/0, 1/1, 2/2, 3/3, 4/4, 5/5} \draw (\x,\y) node {$\bullet$}; % x - y
  \foreach\x/\y in {0/3, 1/4, 2/5, 3/0, 4/1, 5/2} \draw (\x,\y) node {$\bullet$}; % x + y 
  \foreach\x/\y in {1/0, 2/1, 3/2, 4/3, 5/4, 0/5, 5/0, 0/1, 1/2, 2/3, 3/4, 4/5} \draw (\x,\y) node {$\bullet$}; % x^2 - x y + y^2
  \draw (2.5,-.75) node[below] {\vbox{\clap{$\mathstrut x^4{-}x^3 y{+}x y^3{-}y^4$}\kern0pt\clap{\strut not separated}}};
\end{tikzpicture}

\begin{tikzpicture}[scale=.275]
  \fill[lightgray] (-.5,3.5) rectangle (5.5, 5.5);
  \draw[xshift=-.5cm,yshift=-.5cm] (0,0) grid (6,6);
  \foreach\x in {0,1,2,3,4,5} \draw (-1,\x) node {\tiny$\x$} (\x,6) node {\tiny$\x$};
  \foreach\x/\y in {2/0, 3/1, 4/2, 5/3, 0/4, 1/5, 4/0, 5/1, 0/2, 1/3, 2/4, 3/5} \draw (\x,\y) node {$\bullet$}; % x^2 + x y + y^2
  \foreach\x/\y in {1/0, 2/1, 3/2, 4/3, 5/4, 0/5, 5/0, 0/1, 1/2, 2/3, 3/4, 4/5} \draw (\x,\y) node {$\bullet$}; % x^2 - x y + y^2
  \draw (2.5,-.75) node[below] {\vbox{\clap{$\mathstrut x^4+x^2 y^2+y^4$}\kern0pt\clap{\strut not separated}}};
\end{tikzpicture}\hfil
\begin{tikzpicture}[scale=.275]
  \fill[lightgray] (-.5,3.5) rectangle (5.5, 5.5);
  \draw[xshift=-.5cm,yshift=-.5cm] (0,0) grid (6,6);
  \foreach\x in {0,1,2,3,4,5} \draw (-1,\x) node {\tiny$\x$} (\x,6) node {\tiny$\x$};
  \foreach\x/\y in {0/0, 1/1, 2/2, 3/3, 4/4, 5/5} \draw (\x,\y) node {$\bullet$}; % x - y
  \foreach\x/\y in {2/0, 3/1, 4/2, 5/3, 0/4, 1/5, 4/0, 5/1, 0/2, 1/3, 2/4, 3/5} \draw (\x,\y) node {$\bullet$}; % x^2 + x y + y^2
  \foreach\x/\y in {1/0, 2/1, 3/2, 4/3, 5/4, 0/5, 5/0, 0/1, 1/2, 2/3, 3/4, 4/5} \draw (\x,\y) node {$\bullet$}; % x^2 - x y + y^2
  \draw (2.5,-.75) node[below] {\vbox{\clap{$\mathstrut x^5{-}x^4 y{+}\cdots{-}y^5$}\kern0pt\clap{\strut not separated}}};
\end{tikzpicture}\hfil
\begin{tikzpicture}[scale=.275]
  \fill[lightgray] (-.5,3.5) rectangle (5.5, 5.5);
  \draw[xshift=-.5cm,yshift=-.5cm] (0,0) grid (6,6);
  \foreach\x in {0,1,2,3,4,5} \draw (-1,\x) node {\tiny$\x$} (\x,6) node {\tiny$\x$};
  \foreach\x/\y in {0/3, 1/4, 2/5, 3/0, 4/1, 5/2} \draw (\x,\y) node {$\bullet$}; % x + y 
  \foreach\x/\y in {2/0, 3/1, 4/2, 5/3, 0/4, 1/5, 4/0, 5/1, 0/2, 1/3, 2/4, 3/5} \draw (\x,\y) node {$\bullet$}; % x^2 + x y + y^2
  \foreach\x/\y in {1/0, 2/1, 3/2, 4/3, 5/4, 0/5, 5/0, 0/1, 1/2, 2/3, 3/4, 4/5} \draw (\x,\y) node {$\bullet$}; % x^2 - x y + y^2
  \draw (2.5,-.75) node[below] {\vbox{\clap{$\mathstrut x^5{+}x^4 y{+}\cdots{+}y^5$}\kern0pt\clap{\strut not separated}}};
\end{tikzpicture}\hfil
\begin{tikzpicture}[scale=.275]
  \draw[xshift=-.5cm,yshift=-.5cm] (0,0) grid (6,6);
  \foreach\x in {0,1,2,3,4,5} \draw (-1,\x) node {\tiny$\x$} (\x,6) node {\tiny$\x$};
  \foreach\x/\y in {0/0, 1/1, 2/2, 3/3, 4/4, 5/5} \draw (\x,\y) node {$\bullet$}; % x - y
  \foreach\x/\y in {0/3, 1/4, 2/5, 3/0, 4/1, 5/2} \draw (\x,\y) node {$\bullet$}; % x + y 
  \foreach\x/\y in {2/0, 3/1, 4/2, 5/3, 0/4, 1/5, 4/0, 5/1, 0/2, 1/3, 2/4, 3/5} \draw (\x,\y) node {$\bullet$}; % x^2 + x y + y^2
  \foreach\x/\y in {1/0, 2/1, 3/2, 4/3, 5/4, 0/5, 5/0, 0/1, 1/2, 2/3, 3/4, 4/5} \draw (\x,\y) node {$\bullet$}; % x^2 - x y + y^2
  \draw (2.5,-.75) node[below] {\vbox{\clap{$\mathstrut x^6-y^6$}\kern0pt\clap{\strut separated}}};
\end{tikzpicture}
\end{center}
\caption{The factors of $x^6-y^6$ in $\set Q[x,y]$ and the sets $T\subseteq\set Z_6^2$ corresponding to them. For the unseparated cases, we highlight one choice of two incompatible rows.} 
\label{fig:fig1}
\vspace{-.25\baselineskip}
\end{figure}
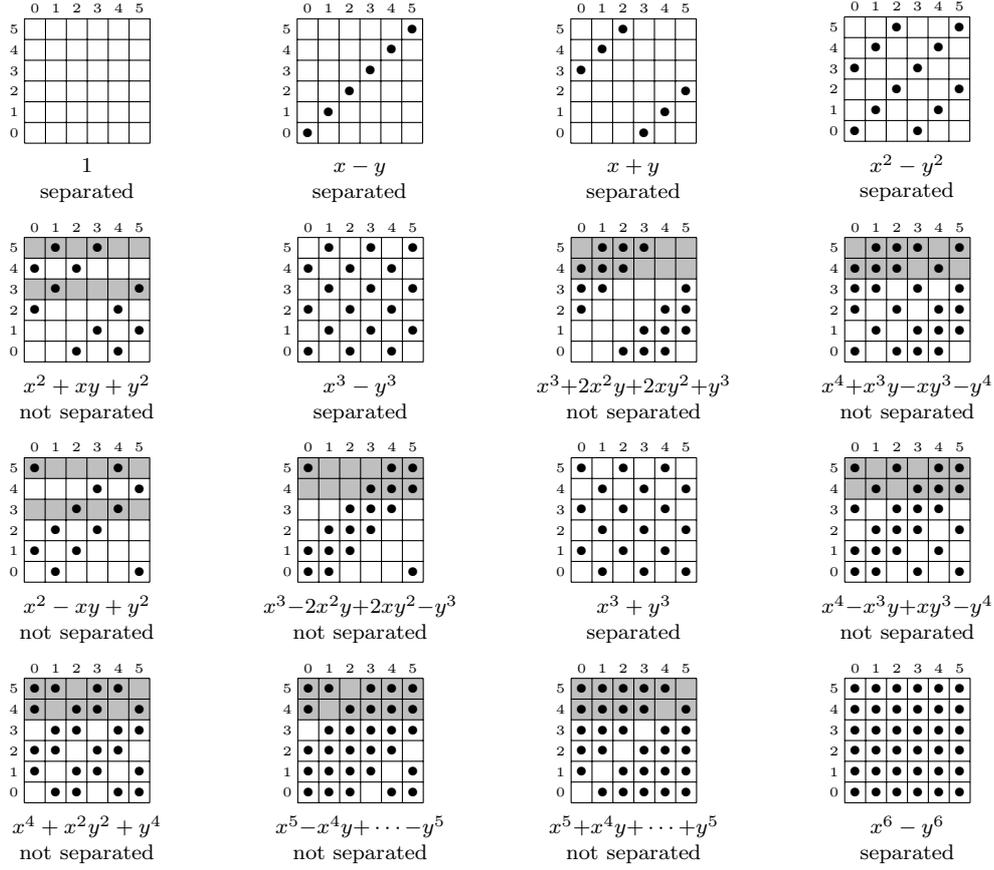

%%%%%%%%%%%%%%%%%%%%%%%%%%%%%%%%%%

\begin{Lemma}\label{lem:closure_invariance}
    Let $T \subseteq \mathbb{Z}_m \times \mathbb{Z}_n$ be invariant with respect to $G \subseteq \mathbf{S}_m \times \mathbf{S}_n$.
    Then $T^{\sep}$ is also $G$-invariant.
\end{Lemma}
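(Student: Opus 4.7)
The plan is to show that the $G$-action permutes the collection of all separated supersets of $T$, and then conclude by taking intersections. Since $T^{\sep}$ is defined as the intersection over this collection, invariance under $G$ will follow formally.

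First I would unpack the $G$-action. An element $\pi \in G \subseteq \mathbf{S}_m \times \mathbf{S}_n$ has the form $\pi = (\sigma,\tau)$ and acts by $\pi(i,j) = (\sigma(i),\tau(j))$. A direct computation shows that for any $S \subseteq \mathbb{Z}_m \times \mathbb{Z}_n$ and any $i \in \mathbb{Z}_m$,
\[
  (\pi S)_{i, \ast} = \tau\bigl(S_{\sigma^{-1}(i), \ast}\bigr).
\]
In other words, $\pi$ permutes the rows according to $\sigma$ and then transforms each row by the bijection~$\tau$.

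Next I would use this to verify that the property of being separated is itself $G$-invariant. If $S$ is separated, then for any two indices $i', j' \in \mathbb{Z}_m$, the sets $(\pi S)_{i',\ast} = \tau(S_{\sigma^{-1}(i'),\ast})$ and $(\pi S)_{j',\ast} = \tau(S_{\sigma^{-1}(j'),\ast})$ are obtained by applying the bijection $\tau$ to two rows of $S$, which are either equal or disjoint. Since $\tau$ preserves both equality and disjointness, $\pi S$ satisfies condition~\eqref{eq:sep_condition} and is therefore separated.

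Finally I would combine the two ingredients. Because $T$ is assumed $G$-invariant, for every separated superset $S \supseteq T$ we also have $\pi S \supseteq \pi T = T$, and by the previous step $\pi S$ is still separated. Thus $\pi$ maps the family $\mathcal{F} := \{S : S \supseteq T \text{ and } S \text{ is separated}\}$ into itself, and since $\pi^{-1}$ does the same, it permutes~$\mathcal{F}$. Taking intersections commutes with such a bijective action, so
\[
  \pi\,T^{\sep} \;=\; \pi \bigcap_{S \in \mathcal{F}} S \;=\; \bigcap_{S \in \mathcal{F}} \pi S \;=\; \bigcap_{S' \in \mathcal{F}} S' \;=\; T^{\sep}.
\]
This holds for every $\pi \in G$, which gives the claim. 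I do not foresee any serious obstacle: the only point that requires care is the row-indexing computation above, where one must correctly track that $\sigma$ acts on the first coordinate while $\tau$ acts on the values inside each row.
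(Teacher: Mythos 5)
Your proof is correct and rests on the same key observation as the paper's: that a permutation $\pi=(\sigma,\tau)$ preserves the property of being separated, because it permutes rows and applies a bijection $\tau$ inside each row. Where the paper then argues by contradiction (if $\pi(T^{\sep})\neq T^{\sep}$, then $T^{\sep}\cap\pi(T^{\sep})$ is a strictly smaller separated superset of $T$, contradicting minimality), you argue directly by noting that $\pi$ permutes the entire family $\mathcal{F}$ of separated supersets of $T$ and that intersection commutes with a bijection of the index family; these are just two ways of exploiting the same fact, and yours is arguably a touch cleaner. One small point in your favor: your row-indexing formula $(\pi S)_{i,\ast}=\tau(S_{\sigma^{-1}(i),\ast})$ is the correct one, whereas the paper writes $\tau(S_{\sigma(i),\ast})$, a harmless slip since $\sigma$ is a bijection and only the set of rows, not their labels, matters for the separatedness condition.
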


\begin{proof}
    Let $\pi = (\sigma, \tau) \in \mathbf{S}_m \times \mathbf{S}_n$, and let 
    $S \subseteq \mathbb{Z}_m \times \mathbb{Z}_n$ be a separated set.
    Since $\pi(S)_{i, \ast} = \tau(S_{\sigma(i), \ast})$,
    we find that $\pi(S)$ is separated as well.
    
    Assume that $T^{\sep}$ is not $G$-invariant, that is, there exists a $\pi \in G$ 
    such that $\pi(T^{\sep}) \neq T^{\sep}$.
    As we have shown, $\pi(T^{\sep})$ is separated, hence so is $S := T^{\sep} \cap \pi(T^{\sep})$.
    Observe that, since $\pi(T^{\sep}) \neq T^{\sep}$, $S \subsetneq T^{\sep}$.
    Since $T$ is $G$-invariant, $T \subseteq \pi(T^{\sep})$, so $T \subseteq S$.
    This contradicts the minimality of $T^{\sep}$.
\end{proof}

%%%%%%%%%%%%%%%

\begin{proof}[Proof of Theorem~\ref{thm:reduction_to_lp}]
  We use Notation~\ref{notation:two_polys} with $\overline{\K (t)}$ being identified with a subfield of the field $F$ of Puiseux series 
  in $t^{-1}$ over~$\overline{\K }$. 
  Let $\alpha_0, \ldots, \alpha_{m-1}$ and $\beta_0, \ldots, \beta_{n-1}$ denote the roots of $f(x)-t$ and $g(y) - t$ and $\overline{\alpha}_0, \ldots, \overline{\alpha}_{m - 1}$ and $\overline{\beta}_0, \ldots, \overline{\beta}_{n - 1}$ their highest degree terms.
  Observe that the highest degree terms are proportional to $t^{1/n}$ and $t^{1/m}$, and hence they are the roots of $\lp_{\omega}(f(x))- t$ and $\lp_{\omega}(g(y)) - t$, respectively.
  We define
  \begin{align*}
      T &= \{(i, j) \in \mathbb{Z}_m \times \mathbb{Z}_n \mid p(\alpha_i, \beta_j) = 0\},\\
      \overline{T} &= \{(i, j) \in \mathbb{Z}_m \times \mathbb{Z}_n \mid \lp_{\omega}(p)(\overline{\alpha}_i, \overline{\beta}_j) = 0\}.
  \end{align*}
  If $\lp_{\omega}(P)$ were not the minimal separated multiple of $\lp_\omega(p)$, by Lemma~\ref{lem:combinatorial_separatedness}, we would have $\overline{T}^{\sep} \subsetneq \mathbb{Z}_m \times \mathbb{Z}_n$.
  Therefore, it is sufficient to show that $\overline{T}^{\sep} = \mathbb{Z}_m \times \mathbb{Z}_n$.
  
  Since
  $$
  p(\alpha_i,\beta_j) = 0 \implies \lp_{\omega}(p)(\overline{\alpha_i},\overline{\beta_j}) = 0,
  $$
  we have $T\subseteq \overline{T}$. 
  By assumption, $P$ is the minimal separated multiple of $p$, so, by Lemma~\ref{lem:closure_invariance}, $T^{\sep} = \mathbb{Z}_m\times\mathbb{Z}_n$. 
  Since $T^{\sep}\subseteq \overline{T}^{\sep}$, this implies that
  $\overline{T}^{\sep} = \mathbb{Z}_m\times\mathbb{Z}_n$, and finishes the proof.
\end{proof}

%%%%%%%%%%%%%%%

\subsection{Algorithm}\label{subsec:algorithm}

The algorithm for finding a generator of the algebra of separated polynomials of a principal
ideal $\<p>$ is based on the results above. First, it uses Theorem~\ref{thm:reduction_to_lp} to 
reduce the situation to a homogeneous polynomial for a suitable grading, then, it uses
Proposition~\ref{prop:homog} to find a degree bound for the minimal separated multiple, and 
finally, it uses linear algebra to determine if such a multiple exists. 

\begin{Algorithm}\label{alg:1b}
  Input: $p\in\K[x,y]\setminus(\K[x]\cup\K[y])$. \\
  Output: $a\in\K[x]\times\K[y]$ such that $\K[a] = A(\langle p \rangle)$.
  The algorithm returns $a=(1,1)$ iff $A(\langle p \rangle)\cong\K$.
  
  \step 10 let $\omega_x,\omega_y\in\set N$ be maximal such that $p$ contains monomials $x^{\omega_y}y^0$ and $x^0y^{\omega_x}$. 
  Such parameters exist because $p$ is not univariate.
  % If no such $\omega_x,\omega_y$ exist at all, return $(1,1)$.
  \step 20 set $h=\lp_\omega(p)$ with $\omega(x^iy^j) := \omega_x i + \omega_y j$.
  \step 30 if $h$ does not contain $x^{\omega_y}$, return $(1,1)$. 
  \step 40 let $\{\zeta_1,\dots,\zeta_m\}\subseteq\overline{\K}$ be the roots of $h(x,1) \in \K[x]$.
  If any of them is not a simple root, return $(1,1)$.
  \step 50 let $N\in\set N$ be minimal such that $(\zeta_i/\zeta_j)^N=1$ for all $i,j$.
  If no such $N$ exists, return $(1,1)$.
  \step 60 make an ansatz 
  \[
  f=\sum_{i=0}^N a_i x^i,\quad g=\sum_{j=0}^{N\omega_x/\omega_y}b_jy^j,
  \]
  compute $\rem_x(f-g,p)$ in $\K(a_0,\dots,a_N,b_0,\dots,b_{N\omega_x/\omega_y},y)[x]$. The result of the reduction belongs to $\K[a_0,\dots,a_N,b_0,\dots,b_{N\omega_x/\omega_y},y,x]$ because 
  the leading coefficient of $p$ is in $\set K$.
  \step 70 equate the coefficients of $\rem_x(f-g,p)$ with respect to $x,y$ to zero and
  solve the resulting linear system for the unknowns $a_i,b_j$.
  \step 80 if there is a nonzero solution, return the corresponding pair $(f,g)$, otherwise return $(1,1)$.
\end{Algorithm}

When $\K$ is a number field, Step~5 can be carried out as follows:
for each ratio $\zeta_i/\zeta_j$, one should check whether the minimal polynomial of
this ratio over $\mathbb{Q}$
is a cyclotomic polynomial $\Phi_n$ and, if yes, return such $n$.
This check can be performed using a bound from~\cite[Theorem~15]{RS} that
yields the upper bound on $n$ based on the degree of the polynomial.

\begin{Proposition}
  Algorithm~\ref{alg:1b} is correct.
\end{Proposition}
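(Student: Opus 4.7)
The plan is to argue correctness by a case analysis on which step produces the output, combining Lemma~\ref{lem:necessary_condition}, Proposition~\ref{prop:homog}, and Theorem~\ref{thm:reduction_to_lp}. The preliminary fact I would establish is that, whenever $p$ is separable, the weight $\omega$ chosen in Steps~1--2 coincides, up to a positive scalar, with the unique weight provided by Lemma~\ref{lem:necessary_condition}. Writing $qp = f(x) - g(y)$ with $m = \deg_x f$ and $n = \deg_y g$, the canonical weight is $\omega'(x^i y^j) = ni + mj$. Using that $\lp_{\omega'}(p)\lp_{\omega'}(q) = \lp_{\omega'}(qp) = cx^m - dy^n$ and factoring the right-hand side into its irreducible $\omega'$-homogeneous components over $\overline{\K}$, one deduces that $\deg_x p = \omega_y$, $\deg_y p = \omega_x$, and that the ratios $\omega_x/\omega_y$ and $n/m$ agree; moreover the coefficient of $x^{\deg_x p}$ in $p$ is forced to be a scalar, which also justifies the polynomiality claim made in Step~6.

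With this in hand, I would handle the early exits one by one. If Step~3 fires, then $\lp_\omega(p)$ contains no pure-$x$ monomial, since the only pure-$x$ candidate of maximal weight is $x^{\omega_y}$; Proposition~\ref{prop:homog}(1) then forbids $\lp_\omega(p)$ from being separable, and Lemma~\ref{lem:necessary_condition}(2) forbids $p$ from being separable. If Step~4 or Step~5 fires, the corresponding half of Proposition~\ref{prop:homog}(2) fails for $\lp_\omega(p)$, and the same reasoning applies. In all these cases, the only separated elements of $\langle p\rangle$ are scalars, so $A(\langle p\rangle) = \K\cdot(1,1)$ and the output $(1,1)$ is correct.

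Once Step~6 is reached, Proposition~\ref{prop:homog} yields an $N$ such that the minimal separated multiple of $\lp_\omega(p)$ is $\omega$-homogeneous of the form $cx^N - dy^{N\omega_x/\omega_y}$. By Theorem~\ref{thm:reduction_to_lp}, if $p$ is separable with minimal separated multiple $P$, then $\lp_\omega(P)$ is exactly this polynomial, so $\deg_x P = N$ and $\deg_y P = N\omega_x/\omega_y$, which is precisely the shape captured by the ansatz. The linear system in Step~7 encodes the condition $p \mid f - g$, so it admits a nonconstant separated multiple of $p$ if and only if $p$ is separable; in the negative case, returning $(1,1)$ is correct.

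It remains to show that any nonconstant solution $(f,g)$ generates $A(\langle p\rangle)$. By~\cite[Theorem~2.3]{fried69}, $A(\langle p\rangle) = \K[(f_0, g_0)]$ for the minimal separated multiple $(f_0, g_0)$; applying a univariate polynomial of degree $k$ to $(f_0, g_0)$ produces a pair whose $x$-component has degree $kN$, so the ansatz's degree bound confines the solution space to $\K\cdot(1,1) + \K\cdot(f_0,g_0)$. Any nonconstant solution is therefore an invertible affine combination of $(1,1)$ and $(f_0,g_0)$, and so generates the same algebra. The main technical obstacle will be the weighted-degree bookkeeping in the preliminary step that aligns the algorithm's weight with the canonical one; the remainder is a matter of assembling the cited results.
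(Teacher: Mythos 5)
Your proof is correct and follows essentially the same route as the paper's: reduce to the homogeneous case via Lemma~\ref{lem:necessary_condition} and Theorem~\ref{thm:reduction_to_lp}, obtain a degree bound from Proposition~\ref{prop:homog}, and finish with linear algebra and~\cite[Theorem~2.3]{fried69}. You are in fact slightly more thorough than the paper in explicitly verifying that the weight chosen in Steps~1--2 coincides (up to scaling) with the canonical weight of Lemma~\ref{lem:necessary_condition} whenever $p$ is separable, and in showing that the solution space in Step~7 has dimension at most two, so that any nonconstant solution generates $A(\langle p\rangle)$.
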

\begin{proof}
  The algorithm consists of an application of the results of 
  the previous section and a handling of degenerate cases not covered
  by these results. In Steps 3--5, it is correct to return $(1,1)$ in
  the indicated situations because Proposition~\ref{prop:homog} implies that
  $h$ is not separable in these cases, which in combination with Lemma~\ref{lem:necessary_condition}
  implies that $p$ is not separable either. 
  
  By Proposition~\ref{prop:homog}, when $h$ has a separated multiple at all,
  it has one of weight~$N\omega_x$, and by Theorem~\ref{thm:reduction_to_lp}, 
  when $p$ has a separated multiple at all, it also has one of weight~$N\omega_x$. 
  Therefore, if $p$ has a separated multiple, 
  it will have one of the shape set up Step~6. For $f-g$ to be a separated
  multiple of $p$ is equivalent to $\rem_x(f-g,p)=0$, which we can safely 
  view as univariate division with respect to the variable $x$ because
  the leading coefficient of $p$ with respect to $x$ does not contain $y$
  (nor any of the undetermined coefficients). It is checked in Step~7
  whether there is a way to instantiate the undetermined coefficients
  in such a way that this remainder becomes zero. If so, any such way
  translates into a separated multiple, and by \cite[Theorem 2.3]{fried69},
  it is a generator of~$A(I)$. 
  If there is no non-zero solution, it is correct to return $(1,1)$.
\end{proof}

%%%%%%%%%%%%%%%%%%%%%%%%%%%%%%%%%%%%%%%%%%%%%%%%%%%%%%%%%%%%%%%%%%%%%

\section{Arbitrary bivariate Ideals}\label{sec:arbitrary}

The case of an arbitrary ideal $I\subseteq\K [x,y]$ is reduced to the two cases 
discussed in Sections~\ref{sec:0} and~\ref{sec:principal}. Every ideal
$I\subseteq\K[x,y]$ can be written as $I=\bigcap_{i=1}^kP_i$, where the $P_i$'s 
are primary ideals. Unless $I=\{0\}$ or $I=\<1>$, these primary ideals have
dimensions zero or one. Primary ideals in $\K[x,y]$ of dimension~1 must be principal
ideals, because $\dim(P_i)=1$ together with Bezout's theorem implies that $P_i$ cannot
contain any elements $p,q$ with $\gcd(p,q)=1$, and then $P_i$ being primary implies that
$P_i$ is generated by some power of an irreducible polynomial. 

The intersection of zero-dimensional ideals is zero-dimensional and the intersection
of principal ideals is principal, so there exists a zero-dimensional ideal $I_0$ and a principal
ideal $I_1$ such that $I=I_0\cap I_1$. These ideals are obtained as the intersections of
the respective primary components of~$I$. When $I_0=\<1>$ or 
$I_1=\<1>$, we have $I=I_1$ or $I=I_0$, respectively, and are in one of the cases 
already considered. Assume now that $I_1,I_0$ are both different from~$\<1>$. 

In order to use the results of Section~\ref{sec:principal}, we have to make sure that
the generator of $I_1$ contains both variables. If this is not the case, say if 
$I_1=\<h>$ for some $h\in\K[x]\setminus\K$, then 
%every element of $I$ contains~$x$,so 
the separated polynomials in $I$ are precisely the elements of $I\cap\K[x]$. 
If $p$ is such that $\<p>=I\cap\K[x]$, then the pairs $(x^ip,0)$ for $i=0,\dots,\deg_x p-1$ 
are generators of~$A(I)$ (see the proof of Proposition~\ref{prop:alg_zerodim_correct}), so this case is settled. 
Therefore, from now on we assume that the generator of $I_1$ 
contains both the variables.

We can compute generators of the algebra $A(I_0)\subseteq\K [x]\times\K [y]$
of separated polynomials in $I_0$ as described in Section~\ref{sec:0} and a
generator of the algebra $A(I_1)\subseteq\K [x]\times\K [y]$ of separated
polynomials in $I_1$ as described in Section~\ref{sec:principal}. Clearly, the algebra
$A(I)\subseteq\K [x]\times\K [y]$ of separated polynomials in $I$ is
$A(I)=A(I_0)\cap A(I_1)$. It thus remains to compute generators for this intersection. In
order to do so, we will exploit that the codimension of $A(I_0)$ as $\set
K$-subspace of $\K [x]\times\K [y]$ is finite (Lemma~\ref{lemma:1}), and that
$A(I_1)=\K [a]$ for some $a\in\K [x]\times\K [y]$. We have to find all polynomials~$p$
such that $p(a)\in A(I_0)$. Polynomials $p$ with a prescribed finite set of
monomials can be found with the help of Lemma~\ref{lemma:1} as follows.

\begin{Algorithm}\label{alg:2}
  Input: $a\in\K [x]\times\K [y]$, $A(I_0)$ and $V$ as in Lemma~\ref{lemma:1},
  and a finite set $S=\{s_1,\dots,s_m\}\subseteq\set N$.\\
  Output: a $\K $-vector space basis of the space of all polynomials $p$
  with $p(a)\in A(I_0)$ such that $p$ involves only monomials with exponents in~$S$.

  \step 10 for $i=1,\dots,m$, compute $r_i\in V$ such that $a^{s_i}-r_i\in A(I_0)$
  \step 20 compute a basis $B$ of the space of all $(c_1,\dots,c_m)\in\K ^m$
  with $c_1r_1+\cdots+c_mr_m=0$
  \step 30 for every element $(c_1,\dots,c_m)\in B$, return $c_1t^{s_1}+\cdots+c_mt^{s_m}$.
\end{Algorithm}

\begin{Proposition}
  Algorithm~\ref{alg:2} is correct.
\end{Proposition}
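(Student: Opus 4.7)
The plan is to reduce correctness to the fact that, by Lemma~\ref{lemma:1}, the subspace $V$ is a vector-space complement of $A(I_0)$ in $\K[x]\times\K[y]$. Thus every element of $\K[x]\times\K[y]$ admits a unique decomposition as the sum of an element of $V$ and an element of $A(I_0)$, and the projection $\rho\colon \K[x]\times\K[y]\to V$ onto the $V$-component along $A(I_0)$ is $\K$-linear. The crucial characterisation is that an element $u\in\K[x]\times\K[y]$ lies in $A(I_0)$ if and only if $\rho(u)=0$, since $V\cap A(I_0)=\{0\}$.

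Next, I would interpret Step~1 as computing $r_i=\rho(a^{s_i})$ for each $i$, using the algorithm provided by Lemma~\ref{lemma:1}. Any polynomial $p$ whose monomials have exponents in $S$ has the form $p=\sum_{i=1}^m c_i t^{s_i}$ for some $c_i\in\K$, and therefore $p(a)=\sum_{i=1}^m c_i a^{s_i}$. By $\K$-linearity of $\rho$,
\[
  \rho(p(a)) \;=\; \sum_{i=1}^m c_i\, \rho(a^{s_i}) \;=\; \sum_{i=1}^m c_i\, r_i.
\]
Combining this with the characterisation above, the condition $p(a)\in A(I_0)$ is equivalent to $\sum_{i=1}^m c_i r_i = 0$.

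From here the argument is immediate: the set of coefficient tuples $(c_1,\dots,c_m)\in\K^m$ for which $p(a)\in A(I_0)$ is precisely the kernel of the linear map $\K^m\to V$ sending $(c_1,\dots,c_m)$ to $\sum_i c_i r_i$. Step~2 computes a basis $B$ of this kernel, and Step~3 converts each basis element into the corresponding polynomial in $t$. Since the correspondence $(c_1,\dots,c_m)\mapsto \sum_i c_i t^{s_i}$ is a $\K$-linear isomorphism between $\K^m$ and the space of polynomials supported on $S$, the returned polynomials form a basis of the target space, proving correctness.

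There is no real obstacle here: once Lemma~\ref{lemma:1} provides both $V$ and an algorithm to compute the $V$-component of any given pair in $\K[x]\times\K[y]$, the proof is a routine verification that a $\K$-linear condition on the coefficients $c_i$ captures precisely membership of $p(a)$ in $A(I_0)$. The only minor point worth stating explicitly is the linearity of the projection $\rho$, which follows from the directness of the sum $V\oplus A(I_0)$.
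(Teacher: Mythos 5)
Your proof is correct and is essentially the same argument as the paper's. The paper verifies the two inclusions directly from $r_i\in V$, $a^{s_i}-r_i\in A(I_0)$, and $V\cap A(I_0)=\{0\}$, while you package the same facts into the linear projection $\rho$ onto $V$ along $A(I_0)$ and observe that $p(a)\in A(I_0)\iff\rho(p(a))=\sum_i c_i r_i=0$; this is a mild reorganization, not a different route.

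One small clarification worth keeping in mind: Lemma~\ref{lemma:1} guarantees that for each $(f,g)$ one can compute \emph{some} $(\tilde f,\tilde g)\in V$ with $(f,g)-(\tilde f,\tilde g)\in A(I_0)$, and it is the directness of $V\oplus A(I_0)$ that makes this element unique and hence equal to $\rho((f,g))$. You implicitly use this when identifying $r_i$ with $\rho(a^{s_i})$, and it would be slightly cleaner to state it, but it does not create a gap.
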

\begin{proof}
  If $(c_1,\dots,c_m)\in\K ^m$ is such that $\sum_{i=1}^m c_ia^{s_i}\in A(I_0)$,
  then $\sum_{i=1}^m c_ir_i\in A(I_0)$, and since $r_i\in V$ for all $i$ and $A(I_0)\cap V=\{0\}$,
  we have $\sum_{i=1}^m c_ir_i=0$. Therefore $(c_1,\dots,c_m)$ is among the vectors
  computed in step~2, so the algorithm does not miss any solutions.
  Conversely, if $(c_1,\dots,c_m)\in\K ^m$ is such that $\sum_{i=1}^m c_i r_i=0$,
  then $\sum_{i=0}^m c_ia^{s_i}=\sum_{i=0}^m c_i(a^{s_i}-r_i)\in A(I_0)$, so the algorithm
  does not return any wrong solutions.
\end{proof}

To find a set of generators of $A(I_0)\cap A(I_1)$, we apply Algorithm~\ref{alg:2} repeatedly.
First call it with $S=\{1,\dots,\dim V+1\}$. Since $|S|>\dim V$, the output must contain
at least one nonzero polynomial~$p_1$. If $d_1$ is its degree, we can restrict the search for
further generators to subsets $S$ of $\set N\setminus d_1\set N$, because when $q$ is such that
$q(a)\in A(I_0)$, then we can subtract a suitable linear combination of powers of $p_1$ to remove
from $q$ all monomials whose exponents are multiples of~$d_1$. 
When $d_1=1$, we have $A(I_0)\cap A(I_1)=\K [a]$ and are done. Otherwise,
$\set N\setminus d_1\set N$
is still an infinite set, so we can choose $S\subseteq\set N\setminus d_1\set N$
with $|S|>\dim V$ and call Algorithm~\ref{alg:2} to find another nonzero polynomial~$p_2$, say of
degree~$d_2$. The search for further generators can be restricted to polynomials consisting of
monomials whose exponents belong to $\set N\setminus(d_1\set N+d_2\set N)$. We can continue
to find further generators of degrees $d_3,d_4,\dots$ with $d_i\in\set N\setminus(d_1\set N+\cdots+d_{i-1}\set N)$
for all~$i$. Since the monoid $(\set N,+)$ has the ascending chain condition, this process
must come to an end.

The end is clearly not reached as long as $g:=\gcd(d_1,\dots,d_m)\neq1$, because then $\set N\setminus g\set N$ is an infinite subset of $\set N\setminus(d_1\set N+\cdots+d_m\set N)$.
Once we have reached $g=1$, it is well known~\cite{owens03,ramirez06} that $\set N\setminus(d_1\set N+\cdots+d_m\set N)$
is a finite set, and there are algorithms~\cite{beihoffer05} for computing its largest element
(known as the Frobenius number of $d_1,\dots,d_m$). We can
therefore constructively decide when all generators have been found.

Putting all steps together, our algorithm for computing the separated polynomials in an arbitrary
ideal of $\K [x,y]$ works as follows. We use the notation $\<d_1,\dots,d_m>$ for the submonoid
$d_1\set N+\dots+d_m\set N$ generated by $d_1,\dots,d_m$ in~$\set N$.

\begin{Algorithm}\label{alg:general}
  Input: an ideal $I\subseteq\K [x,y]$, given as a finite set of ideal generators\\
  Output: a finite set of generators for the algebra $A(I)$ of separated polynomials of~$I$

  \step 10 if $\dim I=0$, call Algorithm~\ref{alg:1}, return the result.
  \step 20 compute a zero-dimensional ideal $I_0$ and a principal ideal $I_1=\<h>$ with $I=I_0\cap I_1$ (for example, using Gr\"obner bases~\cite{becker93} and the remarks at the beginning of this section).
  \step 30 if $h\in\K[x]$, compute $p$ such that $\<p>=I\cap\K[x]$, return the pairs
    $(x^ip,0)$ for $i=0,\dots,\deg_x p-1$. Likewise if $h\in\K[y]$.
  \step 40 call Algorithm~\ref{alg:1} to get generators of~$A(I_0)$, and let $V$ be as in Lemma~\ref{lemma:1}.
  \step 50 call Algorithm~\ref{alg:1b} to get an $a\in\K [x]\times\K [y]$ 
    with $A(I_1)=\K [a]$. If $A(I_1)\cong\K $, return $(1, 1)$.
  \step 60 $G=\emptyset$, $\Delta=\emptyset$.
  \step 70 while $\gcd(\Delta)\neq1$, do:
  \step 81 select a set $S\subseteq\set N\setminus\<\Delta>$ with $|S|>\dim V$ and call Algorithm~\ref{alg:2}
  to find a nonzero polynomial $p$ with $p(a)\in A(I_0)$ consisting only of monomials with exponents in~$S$.
  \step 91 $G=G\cup\{p\}$, $\Delta=\Delta\cup\{\deg_x p\}$
  \step{10}0 call Algorithm~\ref{alg:2} with $S=\set N\setminus\<\Delta>$ (which is now a computable finite set)
  and add the resulting polynomials to~$G$.
  \step{11}0 return $G$
\end{Algorithm}

An implementation of the algorithm in Mathematica can be found on the website of the second author.
Incidentally, the algorithm also shows that $A(I)$ is always a finitely generated $\K $-algebra.

\begin{Example}
  For the ideal 
  \[
  I=\<(x^2-xy+y^2)(x^3-2xy^2-1),(x^2-xy+y^2)(y^3-2x^2y-1)>
  \]
  we have $I_0=\<x^3-2xy^2-1,y^3-2x^2y-1>$ and $I_1=\<x^2-xy+y^2>$.
  Algorithm~\ref{alg:1} yields a somewhat lengthy list of generators for~$A(I_0)$ from which 
  it can be read off that a suitable choice for $V$ is the $\K$-vector space generated
  by $(0,y^i)$ for $i=0,\dots,8$. In particular, $\dim V=9$.
  Algorithm~\ref{alg:1b} yields $A(I_1)=\K[(x^3,-y^3)]$.
  
  Making an ansatz for a polynomial $p$ of degree at most $10$ such that $p(a)\in A(I_0)$,
  we find a solution space of dimension~7. Its lowest degree element is $t^4-2t^2$,
  giving rise to the element $(x^{12}-2x^6,y^{12}-2y^6)$ of $A(I_0)\cap A(I_1)$. If we discard
  the other solutions and continue with the next iteration, we search for polynomials $p$ whose 
  support is contained $\{x^s:s\in S\}$ for $S=\{1,2,3,5,6,7,9,10,11,13\}$. Again, the solution 
  space turns out to have dimension~7.
  The lowest degree element is now $9t^5-26t^3+17$. Since $\gcd(4,5)=1$, we can exit
  the while loop. In step~10 of the algorithm, we get $S=\{1,2,3,6,7,11\}$, and this
  exponent set leads to a solution space of dimension three, generated by the polynomials
  $81t^6-323t^3$, $81t^7-539t^3+458$, and $6561t^{11}-191125t^3+184564$. The resulting
  generators of $A(I)=A(I_0)\cap A(I_1)$ are therefore the pairs $p((x^3,-y^3))$ where $p$ runs through
  the five polynomials found by the algorithm.
\end{Example}

\section{More than two variables}
\label{sec:more_than_two}

It is a natural question whether anything more can be said about the case of several
variables. Incidentally, a multivariate version would be needed in order to solve the 
combinatorial problem that motivated this research in the first place. 

Algorithm~\ref{alg:1} for bivariate zero-dimensional ideals works in the same way for zero-dimensional ideals of $\K[x_1,\dots,x_n,y_1,\dots,y_m]$ for arbitrary $n,m$.
Also Lemma~\ref{lemma:1} generalizes without problems.
We believe that with some further work, our results for
principal ideals can also be generalized to the case of several variables. 
However, in general, not every polynomial ideal with more than two variables is the 
intersection of a principal ideal and a zero-dimensional ideal, so the route taken in
Section~\ref{sec:arbitrary} is blocked. Also, as the next example shows 
%pointed out in the introduction, 
we cannot expect an algorithm that finds the algebra of separated polynomials for an 
arbitrary ideal $I\subseteq\K[x_1,\dots,x_n,y_1,\dots,y_m]$, 
since it does not need to be finitely generated.

\begin{Example}[$A(I)$ is not necessarily finitely generated]\label{ex:not_finitely_generated}
  It is shown in \cite[Example 1.3]{mondal17} that the algebra
  \begin{alignat*}1
    R &:= \set C[t_1^2,t_1^3,t_2^{\vphantom2}]\cap\set C[t_1^2,t_2^{\vphantom2}-t_1^{\vphantom2}] \subset \set C[t_1, t_2]
  \end{alignat*}
  is not finitely generated. Consider the ideal
  \begin{alignat*}1
    I &= \langle x_1-t_1^2,x_2-t_1^3,x_3-t_2^{\vphantom2},\\
      &\qquad y_1-t_1^2,y_2-(t_2-t_1)\rangle\cap\set C[x_1,x_2,x_3,y_1,y_2]
    \\&=\<x_1 - y_1, -x_2 + x_3 y_1 - y_1 y_2, x_3^2 - y_1 - 2 x_3 y_2 + y_2^2>.
  \end{alignat*}
  We claim that $A(I)\cong R$ as $\set C$-algebras, implying that $A(I)$ is
  not finitely generated. 
  We show that $\phi\colon A(I)\to R$ defined by  $\phi(f,g)=f(t_1^2,t_1^3,t_2^{\vphantom2})$ is an isomorphism:
  \begin{itemize}
  \item $\phi$ is well-defined (the image is contained in~$R\subseteq\set C[t_1^2,t_1^3,t_2^{\vphantom2}]$).
    To see this, note that, $(f,g)\in A(I)$ means $f-g\in I$, which by definition of $I$ means
    $f(t_1^2,t_1^3,t_2^{\vphantom2})=g(t_1^2,t_2-t_1)$. Therefore, $f(t_1^2,t_1^3,t_2^{\vphantom2})\in\set C[t_1^2,t_2^3,t_2^{\vphantom2}]\cap\set C[t_1^2,t_2-t_1]=R$.
  \item $\phi$ is surjective. For every $p\in R$ there exist polynomials $f,g$ with $p=f(t_1^2,t_1^3,t_2^{\vphantom2})=g(t_1^2,t_2-t_1)$.
    By definition of $I$ we have $f(x_1,x_2,x_3)-g(y_1,y_2)\in I$, hence $(f,g)\in A(I)$.
    Now $\phi(f)=p$, so $p$ is in the image of~$\phi$.
  \item $\phi$ is injective. This follows from $I\cap\set C[y_1,y_2]=\{0\}$. \qed
  \end{itemize}
\end{Example}

It would still make sense to ask for an algorithm that decides whether $A(I)$ is nontrivial. 
We do not have such an algorithm, but being able to solve the problem in the bivariate case gives rise to a 
necessary condition.

\begin{Proposition}
 Let 
 \[\xi\colon \mathbb{K}[x_1, \ldots, x_n] \to \mathbb{K}[x] \;\text{ and }\;\eta\colon \mathbb{K}[y_1, \ldots, y_m] \to \mathbb{K}[y]
 \]
 be two homomorphisms, and let $I\subseteq \mathbb{K}[x_1,\dots,x_n,y_1,\dots,y_m]$ be an ideal such that 
 \[ 
 I\cap \mathbb{K}[y_1,\dots,y_m] = \{0\} \;\text{ and }\;(\mathrm{id}\otimes\eta)(I)\cap \mathbb{K}[x_1,\dots,x_n] = \{0\}.
 \]
 If the algebra of separated polynomials of $I$ is non-trivial, then so is the algebra of separated polynomials of $J:= (\xi \otimes \eta) (I)\subseteq\set K[x, y]$.
\end{Proposition}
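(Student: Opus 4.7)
The natural candidate for a nontrivial element of $A(J)$ is, of course, the pair $(\xi(f),\eta(g))$, where $(f,g)$ is some nontrivial element of $A(I)$. That this pair lies in $A(J)$ is immediate, since $\xi(f)-\eta(g)=(\xi\otimes\eta)(f-g)\in(\xi\otimes\eta)(I)=J$. So the whole content of the proposition is in showing that when $(f,g)$ is nontrivial in the sense that $(f,g)\notin\mathbb{K}\cdot(1,1)$, the pair $(\xi(f),\eta(g))$ is nontrivial in the same sense in $A(J)$.

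The plan is to argue by contradiction: assume $(\xi(f),\eta(g))=(c,c)$ for some $c\in\mathbb{K}$ and derive that $(f,g)$ must itself be of the form $(c,c)$, contradicting the choice of $(f,g)$. The two hypotheses on $I$ are designed precisely for this: the second hypothesis converts the collapse on the $x$-side, and the first one converts the collapse on the $y$-side.

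Concretely, I would first apply $\mathrm{id}\otimes\eta$ to the relation $f-g\in I$. This gives $f-\eta(g)\in(\mathrm{id}\otimes\eta)(I)$, and under the assumption $\eta(g)=c$ this element becomes $f-c\in\mathbb{K}[x_1,\dots,x_n]$. The hypothesis $(\mathrm{id}\otimes\eta)(I)\cap\mathbb{K}[x_1,\dots,x_n]=\{0\}$ then forces $f=c$. Substituting back, $f-g=c-g\in I\cap\mathbb{K}[y_1,\dots,y_m]=\{0\}$ by the first hypothesis, so $g=c$ as well, and we recover $(f,g)=(c,c)$, the desired contradiction.

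The only potential subtlety is what happens if $(\xi(f),\eta(g))$ is constant but of the form $(c_1,c_2)$ with $c_1\neq c_2$; but such a pair automatically lies outside $\mathbb{K}\cdot(1,1)$ and already witnesses the nontriviality of $A(J)$, so there is nothing to do in that case. Thus the only case requiring work is the one handled above, and the argument is essentially a two-line chase through the two hypotheses—there is no deep obstacle here; the proposition is really a clean consequence of the two non-degeneracy conditions imposed on $I$.
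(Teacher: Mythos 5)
Your proof is correct and is essentially the same argument as the paper's: push the relation $f-g\in I$ through $\mathrm{id}\otimes\eta$, use the second hypothesis to force $f$ constant, then use the first hypothesis to force $g$ constant, contradicting nontriviality of $(f,g)$. The extra remark you add about the case $(c_1,c_2)$ with $c_1\neq c_2$ is a harmless clarification that the paper leaves implicit.
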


\begin{proof}
  Let $(f,g)$ be an arbitrary, non-constant element of $A(I)$. If
  $(\xi(f),\eta(g))\in A(J)$ were a $\K$-multiple of $(1,1)$, 
  we would find that $f - \eta(g)$ were an element of $(\mathrm{id}\otimes\eta)(I)\cap
  \mathbb{K}[x_1,\dots,x_n]$, and hence, by our assumption, that $f$ itself were 
  a constant. So $f-g\in I\cap\mathbb{K}[y_1,\dots,y_m]$, and hence, by 
  assumption, $g=f$ is a constant as well, contradicting that $(f,g)$ is not a constant. 
\end{proof}

The examples below show different reasonable choices for homomorphisms $\xi$ and $\eta$.

\begin{Example}
  Consider the polynomial $p = x^2 + x y_1 y_2 + y_1^2 + y_2^2$. 
  Let $\xi = \mathrm{id}$ and let $\eta$ be defined by $\eta(y_1) = y$, $\eta(y_2) = 2$.
  Notice that $\eta$ is just the evaluation of $y_2$ at $2$.
  Then $(\xi\otimes\eta)(p) = x^2 + 2 x y_1 + y_1^2 + 4$, a polynomial that is not separable. 
  Hence $p$ is not separable.
\end{Example}

\begin{Example}
  Consider the polynomial $p = x^2 + xy_1 + y_1^2 + y_2^4$.
  We cannot use the same strategy as in the previous example because any evaluation of $y_1$ or $y_2$ results in a separable polynomial. Nevertheless, the homomorphism defined by $\xi(x) = x$, $\eta(y_1) = y^2$, and $\eta(y_2) = y$
  maps $p$ to $(\xi \otimes \eta)(p) = x^2 + xy^2 + 2y^4$, a polynomial which is not separable.
  So $p$ is not separable either.
\end{Example}

\par\medskip\noindent\textbf{Acknowledgements.}
We thank Erhard Aichinger and Josef Schicho for sharing their thoughts on the
topic and for providing pointers to the literature. 
We also thank the referees for their careful reading and their valuable suggestions.
MB was supported by the Austrian FWF grant F5004. Part of this work was done during the visit of MB to HSE University. MB would like to thank the Faculty of Computer Science of HSE for its hospitality.
MK was supported by the Austrian FWF grants F5004 and P31571-N32.
GP was supported by NSF grants CCF-1564132, CCF-1563942, DMS-1853482, DMS-1853650, and DMS-1760448, by
PSC-CUNY grants \#69827-0047 and \#60098-0048.

\bibliographystyle{abbrvnat}
\bibliography{main}

\begin{thebibliography}{19}
\providecommand{\natexlab}[1]{#1}
\providecommand{\url}[1]{\texttt{#1}}
\expandafter\ifx\csname urlstyle\endcsname\relax
  \providecommand{\doi}[1]{doi: #1}\else
  \providecommand{\doi}{doi: \begingroup \urlstyle{rm}\Url}\fi

\bibitem[Aichinger and Steinerberger(2011)]{aichinger11}
E.~Aichinger and S.~Steinerberger.
\newblock A proof of a theorem by {F}ried and {M}ac{R}ae and applications to
  the composition of polynomial functions.
\newblock \emph{Archiv der Mathematik}, 97:\penalty0 115--124, 2011.
\newblock URL \url{https://doi.org/10.1007/s00013-011-0280-z}.

\bibitem[Alfons{\'i}n(2006)]{ramirez06}
J.~L.~R. Alfons{\'i}n.
\newblock \emph{The {D}iophantine {F}robenius Problem}.
\newblock Oxford University Press, 2006.

\bibitem[Beals et~al.(2009)Beals, Wetherell, and Zieve]{beals2009}
R.~M. Beals, J.~L. Wetherell, and M.~E. Zieve.
\newblock Polynomials with a common composite.
\newblock \emph{Israel Journal of Mathematics}, 174\penalty0 (1):\penalty0
  93--117, 2009.
\newblock URL \url{https://doi.org/10.1007/s11856-009-0105-y}.

\bibitem[Becker et~al.(1993)Becker, Weispfenning, and Kredel]{becker93}
T.~Becker, V.~Weispfenning, and H.~Kredel.
\newblock \emph{{G}r{\"o}bner Bases}.
\newblock Springer, 1993.

\bibitem[Beihoffer et~al.(2005)Beihoffer, Hendry, Nijenhuis, and
  Wagon]{beihoffer05}
D.~Beihoffer, J.~Hendry, A.~Nijenhuis, and S.~Wagon.
\newblock Faster algorithms for {F}robenius numbers.
\newblock \emph{Electronic Journal of Combinatorics}, 12:\penalty0 R27, 2005.
\newblock URL
  \url{https://www.combinatorics.org/ojs/index.php/eljc/article/view/v12i1r27}.

\bibitem[Bilu and Tichy(2000)]{BT00}
Y.~Bilu and R.~Tichy.
\newblock The diophantine equation $f(x)=g(y)$.
\newblock \emph{Acta Arithmetica}, 95\penalty0 (3):\penalty0 261--288, 2000.
\newblock URL \url{http://eudml.org/doc/207451}.

\bibitem[Bilu(1999)]{B99}
Y.~F. Bilu.
\newblock Quadratic factors of $f(x) - g(y)$.
\newblock \emph{Acta Arithmetica}, 90\penalty0 (4):\penalty0 341--355, 1999.
\newblock URL \url{http://eudml.org/doc/207332}.

\bibitem[Binder(1996)]{binder96}
F.~Binder.
\newblock Fast computations in the lattice of polynomial rational function
  fields.
\newblock In \emph{Proceedings of ISSAC'96}, pages 43--48, 1996.
\newblock URL \url{https://doi.org/10.1145/236869.236895}.

\bibitem[Bousquet-M{\'e}lou(2016)]{bousquet16}
M.~Bousquet-M{\'e}lou.
\newblock An elementary solution of {G}essel's walks in the quadrant.
\newblock \emph{Advances in Mathematics}, 303:\penalty0 1171--1189, 2016.
\newblock URL \url{https://doi.org/10.1016/j.aim.2016.08.038}.

\bibitem[Cassels(1969)]{cassels}
J.~W.~S. Cassels.
\newblock Factorization of polynomials in several variables.
\newblock In K.~E. Aubert and W.~Ljunggren, editors, \emph{Proceedings of the
  15th Scandinavian Congress Oslo 1968}, pages 1--17. Springer Berlin
  Heidelberg, 1969.
\newblock ISBN 978-3-540-36246-3.
\newblock URL \url{https://doi.org/10.1007/BFb0060248}.

\bibitem[Cassou-Noguès and Couveignes(1999)]{CNC99}
P.~Cassou-Noguès and J.-M. Couveignes.
\newblock Factorisations explicites de $g(y) - h(z)$.
\newblock \emph{Acta Arithmetica}, 87\penalty0 (4):\penalty0 291--317, 1999.
\newblock URL \url{https://doi.org/10.4064/aa-87-4-291-317}.

\bibitem[Davenport et~al.(1961)Davenport, Lewis, and Schinzel]{DLS61}
H.~Davenport, D.~J. Lewis, and A.~Schinzel.
\newblock {Equations of the form $f(x)=g(y)$}.
\newblock \emph{The Quarterly Journal of Mathematics}, 12\penalty0
  (1):\penalty0 304--312, 1961.
\newblock URL \url{https://doi.org/10.1093/qmath/12.1.304}.

\bibitem[Engstrom(1941)]{engstrom1941}
H.~T. Engstrom.
\newblock Polynomial substitutions.
\newblock \emph{American Journal of Mathematics}, 63\penalty0 (2):\penalty0
  249--255, 1941.
\newblock URL \url{https://jstor.org/stable/2371520}.

\bibitem[Fried(1973)]{F73}
M.~D. Fried.
\newblock The field of definition of function fields and a problem in the
  reducibility of polynomials in two variables.
\newblock \emph{Illinois Journal of Mathematics}, pages 128--146, 1973.
\newblock URL
  \url{https://projecteuclid.org/download/pdf_1/euclid.ijm/1256052044}.

\bibitem[Fried and MacRae(1969)]{fried69}
M.~D. Fried and R.~E. MacRae.
\newblock On curves with separated variables.
\newblock \emph{Mathematische Annalen}, 180:\penalty0 220--226, 1969.
\newblock URL \url{https://doi.org/10.1007/BF01350739}.

\bibitem[Mondal(2017)]{mondal17}
P.~Mondal.
\newblock When is the intersection of two finitely generated subalgebras of a
  polynomial ring also finitely generated?
\newblock \emph{Arnold Mathematical Journal}, 3\penalty0 (3):\penalty0
  333--350, 2017.
\newblock URL \url{https://doi.org/10.1007/s40598-017-0068-8}.

\bibitem[Owens(2003)]{owens03}
R.~W. Owens.
\newblock An algorithm to solve the {F}robenius problem.
\newblock \emph{Mathematics Magazine}, 76\penalty0 (4):\penalty0 264--275,
  2003.
\newblock URL \url{https://doi.org/10.2307/3219081}.

\bibitem[Rosser and Schoenfeld(1962)]{RS}
J.~B. Rosser and L.~Schoenfeld.
\newblock Approximate formulas for some functions of prime numbers.
\newblock \emph{Illinois Journal of Mathematics}, 6\penalty0 (1):\penalty0
  64--94, 1962.
\newblock URL \url{https://doi.org/10.1215/ijm/1255631807}.

\bibitem[Schicho(1995)]{schicho95}
J.~Schicho.
\newblock A note on a theorem of {F}ried and {M}ac{R}ae.
\newblock \emph{Archiv der Mathematik}, 65:\penalty0 239--243, 1995.
\newblock URL \url{https://doi.org/10.1007/BF01195093}.

\end{thebibliography}

\end{document}